\theoremstyle{definition}
\newtheorem{definition}{Definition}
\theoremstyle{plain}
\newtheorem{theorem}{Theorem}
\newtheorem{lemma}{Lemma}
\newtheorem{proposition}{Proposition}
\newtheorem{corollary}{Corollary}
\theoremstyle{remark}
\newtheorem{remark}{Remark}
\theoremstyle{definition}
\def\Fbar{ {\overline F}}         
\def\oo{\infty}                   
\def\lm{\lambda}                  \def\th{\theta}
\def\sg{{\sigma}}
\def\E{\mathbb{E}}
\def\N{\mathbb{N}}
\def\P{\mathbb{P}}
\def\R{\mathbb{R}}
\def\Cov{\mathrm{Cov}}
\def\Var{\mathrm{Var}}
\newcommand{\sB}[1]{\left[ #1 \right]}
\newcommand{\rB}[1]{\left( #1 \right)}
\newcommand{\cB}[1]{\left\{ #1 \right\}}
\newcommand{\iSum}{\sum^n_{i=1}}
\newcommand{\INTx}[1]{\int^{\oo}_{0} #1 \,dx}
\newcommand{\INTth}[1]{\int^{\oo}_{0} #1 \,d\th}
\newcommand{\al}{\alpha}
\newcommand{\Th}{\Theta}
\newcommand{\gm}{\gamma}
\newcommand{\aCTEa}[1]{\mathrm{CTE}_\al(#1)}
\newcommand{\bTMa}[2]{\mathrm{TM}_{\al, #1}(#2)}
\newcommand{\bTCMa}[2]{\mathrm{TCM}_{\al, #1}(#2)}
\newcommand{\aINTSa}[1]{\int^{\oo}_{s_\al} #1 \,ds}
\newcommand{\SUMj}{\sum^n_{j=1}}
\newcommand{\SUMk}{\sum^n_{k=1}}
\newcommand{\SL}{S^{*(l)}}
\newcommand{\Cl}{c^{*(l)}}
\newcommand{\ALl}{\al^{*(l)}}
\newcommand{\SLl}{S^{*(l+1)}}
\newcommand{\Cll}{c^{*(l+1)}}
\newcommand{\ALll}{\al^{*(l+1)}}
\newcommand{\FRl}{\frac{(1-\ALll)\Cll}{(1-\ALl)\Cl}}
\newcommand{\aFRACa}[1]{\frac{#1}{1-\al}}
\newcommand{\FRa}{\frac{1}{1-\al}}
\newcommand{\FRb}{\aFRACa{1-\ALa}}
\newcommand{\FRc}{\aFRACa{1-\ALa} \Ca}
\newcommand{\FRd}{\aFRACa{1-\ALb} \Cb}
\newcommand{\ALa}{\al^*}
\newcommand{\ALb}{\al^{**}}
\newcommand{\SGi}{\sg^2_i}
\newcommand{\SGS}{\sg^2_S}
\newcommand{\SGij}{\sg_{ij}}
\newcommand{\SGiS}{\sg_{iS}}
\newcommand{\SGjS}{\sg_{jS}}
\newcommand{\MUi}{\mu_i}
\newcommand{\MUk}{\mu_k}
\newcommand{\GMi}{\gm_i}
\newcommand{\GMj}{\gm_j}
\newcommand{\GMk}{\gm_k}
\newcommand{\AIa}{a_{0,i}}
\newcommand{\AIb}{a_{1,i}}
\newcommand{\AIc}{a_{2,i}}
\newcommand{\AJa}{a_{0,j}}
\newcommand{\AJb}{a_{1,j}}
\newcommand{\AJc}{a_{2,j}}
\newcommand{\Ca}{c^*}
\newcommand{\Cb}{c^{**}}
\newcommand{\ST}{S \mid \Th}
\newcommand{\St}{S \mid \th}
\newcommand{\FBARSt}{\Fbar_{S \mid \th}(s_\al)}
\newcommand{\FS}{f_S(s)}
\newcommand{\FSt}{f_{\St}(s)}
\newcommand{\aHS}[1]{h_{S^{#1}}(s_\al)}
\newcommand{\PIt}{\pi(\th)}
\newcommand{\PIta}{\pi^*(\th)}
\newcommand{\PItb}{\pi^{**}(\th)}
\newcommand{\cTCMb}[3]{\bCES{#1 \rB{#2 - \aCTEa{#2}}^{#3}}{#2}}
\newcommand{\aCEEs}[1]{\E \sB{#1 \mid S=s}}
\newcommand{\aCEEst}[1]{\E \sB{#1 \mid S=s, \Th = \th}}
\newcommand{\aCOVst}[1]{\Cov \sB{#1 \mid S=s, \Th = \th}}
\newcommand{\bCES}[2]{\E \sB{#1 \mid #2 > s_\al}}
\newcommand{\bCESt}[2]{\E \sB{#1 \mid #2 > s_\al, \Th = \th}}
\newcommand{\aTCOVS}[1]{\Cov \sB{#1 \mid S > s_\al}}
\newcommand{\aTCMS}[1]{\E \sB{\rB{S - \aCTEa{S}}^{#1} \mid S>s_\al}}
\begin{document}

\title{Capital allocation and tail central moments for the multivariate normal mean-variance mixture distribution}

\author{
Enrique Calder\'{i}n-Ojeda
\thanks{Department of Economics, University of Melbourne, Australia. 
\Letter~{\scriptsize\url{enrique.calderin@unimelb.edu.au}}}
\and
Yuyu Chen
\thanks{Department of Economics, University of Melbourne, Australia. 
\Letter~{\scriptsize\url{yuyu.chen@unimelb.edu.au}}}
\and 
Soon Wei Tan 
\thanks{Department of Economics, University of Melbourne, Australia. \Letter~{\scriptsize\url{soonweit@student.unimelb.edu.au}}}
\thanks{Corresponding author}
}

\date{}
\maketitle

\begin{abstract} 
Capital allocation is a procedure used to assess the risk contributions of individual risk components to the total risk of a portfolio. While the conditional tail expectation (CTE)-based capital allocation is arguably the most popular capital allocation method, its inability to reflect important tail behaviour of losses necessitates a more accurate approach. In this paper, we introduce a new capital allocation method based on the tail central moments (TCM), generalising the tail covariance allocation informed by the tail variance. We develop analytical expressions of the TCM as well as the TCM-based capital allocation for the class of normal mean-variance mixture distributions, which is widely used to model asymmetric and heavy-tailed data in finance and insurance. As demonstrated by a numerical analysis, the TCM-based capital allocation captures several significant patterns in the tail region of equity losses that remain undetected by the CTE, enhancing the understanding of the tail risk contributions of risk components. 
 
\textbf{Keywords}: Capital allocation; tail central moments; tail variance;  normal mean–variance mixture distribution.
\end{abstract}

\section{Introduction}
Risk assessment is a core task in finance and insurance. For an agent who manages a portfolio consisting of multiple assets, a common procedure is capital allocation. This is usually achieved through two main steps. Firstly, the agent decides on a total capital reserve based on their risk preferences. Secondly, the capital reserve is distributed across all individual assets in a way that reflects their risk contributions. Capital allocation has broader purposes than its literal meaning of physically allocating capital to each asset, such as deciding portfolio weights, comparing asset profitability, and so on. For discussions on various capital allocation principles, properties, and applications, see, e.g., \cite{D01}, \cite{K05}, \cite{D12}, \cite{G21}, and references therein.

Risk measure, which maps a random loss to a real number, is a common tool to determine the capital reserve for financial institutions. One of the regulatory risk measures used in the realms of banking and insurance is the conditional tail expectation (CTE); see, e.g., \cite{M15}. The CTE satisfies the so-called coherence properties that a desirable risk measure should fulfil (\cite{A99} and \cite{D01}). Consequently, the CTE-based capital allocation can effectively capture the diversification benefits in a portfolio, making it the most important case of the Euler allocation principle (e.g., \cite{D01}, \cite{T04}, and \cite{T08}). Moreover, the CTE-based capital allocation arises as a special case of the optimisation approach to capital allocation as shown in, e.g., \cite{LG04} and \cite{D12}.

Despite the various advantages of the CTE and its allocation method, it has been pointed out that the CTE cannot capture sufficient tail behaviour of the loss distribution. Under severely unfavourable conditions, the actual loss may far exceed the agent's capital reserves based on the CTE. Therefore, this has led to suggestions to supplement the CTE with higher order moments for a more comprehensive evaluation of a portfolio's risk characteristics. In the finance literature, higher moments, most notably skewness and kurtosis, are commonly used in risk assessment; see, e.g., \cite{S70}, \cite{S01}, \cite{H10}, and \cite{AP18}. In the context of capital allocation, the most prominent consideration is the tail variance (TV) (see, e.g., \cite{V04} and \cite{FL06}). However, to our best knowledge, research on capital allocation with the TV is scarce, and no studies have yet considered capital allocation with other tail moments of higher order.

To address this gap in the literature, we first introduce a new capital allocation method based on the tail central moments (TCM), generalising the tail covariance-based capital allocation of \cite{V04} and \cite{FL06}. Secondly, we derive recursive analytical expressions of the TCM and the TCM-based capital allocation for the general class of multivariate normal mean-variance mixture (NMVM) distributions (Theorems \ref{thm:1} and \ref{thm:2}). The NMVM class is known to be extremely flexible and contains many notable members \citep{M15}. One such example is the generalised hyperbolic (GH) distribution, which itself includes the normal, skewed student-$t$, variance Gamma, normal inverse Gaussian, hyperbolic, and other renowned distributions as special cases. The GH distribution is well recognised for its effectiveness in modelling financial and actuarial data due to its connections with the L\'evy process, especially one that exhibits tail behaviour and asymmetry (see, e.g., \cite{EK95}, \cite{N09}, and \cite{KW14}).

This paper contributes to the rich literature of capital allocation for multivariate distributions. The literature on the CTE-based capital allocation is extensive and well-developed. \cite{P02} derived the CTE-based capital allocation for the multivariate normal distribution. This result was later expanded in different directions. One direction considers distributions with heavy tails, such as the elliptical distribution and its extensions (\cite{LV03}, \cite{IL21}, and \cite{IL25}), the GH distribution (\cite{IL15} and \cite{IL19}), and the NMVM class \citep{KK19}. Other directions focus on skewed distributions and compound distributions, see, e.g., \cite{V06} for the CTE-based capital allocation of skewed distributions and \cite{FL10} and \cite{D20} for that of compound distributions. On the contrary, only a few studies have examined the TV-based capital allocation, such as \cite{V04} for the normal distribution, \cite{V05} and \cite{FL06} for the elliptical distribution, \cite{L13} for the lognormal distribution, and e.g., \cite{W14} and \cite{R22} for other applications. Our results broadly contribute to the literature by introducing a novel TCM-based capital allocation to enhance the accuracy of risk assessment. In particular, our results complement those of \cite{KK19}.

The remainder of this paper is organised as follows. Section \ref{sec:2}  introduces the TCM-based capital allocation method and the NMVM class. In Section \ref{sec:3}, recursive analytical expressions for the TCM of the univariate NMVM distribution are derived. Section \ref{sec:4} applies the TCM-based capital allocation to the multivariate NMVM class to obtain explicit expressions for the capital allocated to each component. Section \ref{sec:5} illustrates our theoretical findings with a numerical example based on the multivariate GH distribution. Section \ref{sec:6} concludes.


\subsection*{Notation}

Denote by $\N_0$ (resp.~$\N$ and $\R_+$) the set of non-negative integers (resp.~positive integers and non-negative real numbers). All vectors are column vectors. For a random variable $X$, we denote by $f_X, F_X, \Fbar_X$, and $h_X$ its density, cumulative distribution, survival and hazard functions, respectively (with $h_X(x)=f_X(x)/\Fbar_X(x)$ for $x\in\R$). For $\al \in (0,1)$, the quantile of  a random variable $X$  is denoted by $x_\al := \inf \cB{x \in \R : \P(X \le x) \ge \al}$. Whenever we consider the $k$-th moment of a random variable $X$, we assume that $\E[|X|^k] < \infty$, where $k\in \N$.

\section{Preliminaries} \label{sec:2}

In this section, we review the definitions of tail moments, capital allocation methods, and the multivariate normal mean-variance mixture distribution. In particular, we introduce a capital allocation method based on the tail central moments. 


\subsection{Tail moments and tail central moments}

The tail moments (TM) and tail central moments (TCM), especially of orders 1 or 2, are commonly used in the literature of capital allocation (see, e.g., \cite{O00}, \cite{V04}, and \cite{KK19}).  

\begin{definition} \label{def:1}
Fix $k \in \N$ and $\al \in (0,1)$. For a random variable $X$, the $k$-th order tail moment (TM) at confidence level $\al$ is defined as
\begin{align*}
\bTMa{k}{X} := \E \sB{X^k \mid X > x_\al}.
\end{align*}
When $k=1$, the TM is referred to as the conditional tail expectation (CTE), denoted by $\aCTEa{X}$.
\end{definition}

\begin{definition} \label{def:2}
Fix $k \in \N$ and $\al\in(0,1)$. For a random variable $X$, the $k$-th order tail central moment (TCM) at confidence level $\al$ is defined as
\begin{align*}
\bTCMa{k}{X} :=
\E \sB{\rB{X - \aCTEa{X}}^k \mid X > x_\al} . 
\end{align*}
When $k=2$, the TCM is referred to as the tail variance (TV).
\end{definition}

\begin{remark} \label{rem:1}
There has been some inconsistency regarding the terminologies of the TM and TCM. The TM and TCM have been referred to as the Tail Conditional Moment in the literature (see, e.g., \cite{K10} and \cite{H19} for the TM and \cite{EK21} for the TCM). When considering an aggregate risk $S=X_1 + \dots + X_n$, \cite{L23} and \cite{Y25} define $\bCES{X^k_i}{S}$ and $\bCES{(X_i-\aCTEa{X_i})^k}{S}$ as the TM and TCM instead.
\end{remark}


\begin{remark} \label{rem:2}
Another approach to generalising the CTE is via stochastic optimisation formulas, often with desirable properties preserved. For instance, \cite{K07} and \cite{G22} considered $\rho(X) = \inf_{x \in \R} \cB{x + (1-q)^{-1} \phi(\max(X-x,0))}$, with $\phi(X)=\E \sB{|X|^p}^{1/p}$ and for some $p \geq 1$, $q \in (0,1)$, which is named as the higher moment risk measure. When $p=1$ and $F_X$ is differentiable, we recover the CTE representation in \cite{RU00}.
\end{remark}

\subsection{Tail central moment-based capital allocation} \label{sec:2.2}
In practice, financial institutions are usually exposed to a portfolio of losses rather than a single loss. The portfolio may consist of policyholders, business lines, or investment assets, depending on the nature of the financial institution. Throughout this paper, we consider an agent with $n \in \N$ random losses $X_1, \dots, X_n$ and denote by $S=X_1+\dots+X_n$ its aggregate loss. After determining the total capital reserve of the aggregate loss $S$, a common practice is to allocate the risk capital to individual losses. Let  $K\in\R$ be the total capital reserve for $S$, and $K_i\in\R$ be the capital allocated to $X_i$ for $i=1,\dots,n$. A capital allocation method is said to satisfy the full allocation property if 
\begin{align*}
K= \iSum K_i.
\end{align*}

One popular capital allocation method is the CTE-based capital allocation, which specifies that 
\begin{align*}
K=\aCTEa{S} \mbox{~~and~~} K_i=\bCES{X_i}{S} \mbox{~~for all~~} i=1, \dots, n.
\end{align*}
It is easy to see that it fulfils the full allocation property. As a coherent allocation principle (see \cite{D01}) with a simple expression, it has received much interest since its introduction in, e.g., \cite{O00}. Nonetheless, the CTE-based capital allocation has certain limitations. In particular, the CTE alone is insufficient in capturing the tail behaviour of losses (e.g., dispersion), which can be crucial to risk management. To address these concerns, we introduce a new class of TCM-based capital allocation methods and discuss some of its properties.

\begin{definition} \label{def:3}
For $k \in \N \setminus \cB{1}$, the $k$-th order TCM-based capital allocation with confidence level $\al \in (0,1)$ is defined as
\begin{align} \label{eq:1}
K = \bTCMa{k}{S} \mbox{~~and~~}
K_i = \aTCOVS{X_i, \rB{S-\aCTEa{S}}^{k-1}} \mbox{~~for all~~} i=1, \dots, n.
\end{align}
\end{definition}

The TCM-based capital allocation provides direct interpretations of the risk contributions of individual losses to the aggregate loss. For instance, if $k=2$, the TCM-based capital allocation method recovers the TV-based capital allocation\footnote{It is referred to as the tail covariance-based capital allocation in \cite{V04} and \cite{FL06}.} in \cite{V04} and \cite{FL06}, i.e., 
\begin{align*}
K_i=\aTCOVS{X_i, S} \mbox{~~for all~~} i=1, \dots, n.
\end{align*}
The TV-based capital allocation thus quantifies the dependence between individual losses and the aggregate loss in tail regions. The TCM-based capital allocation can also capture relationships between the aggregate tail dispersion and each component. One example is $k=3$, with 
\begin{align*}
K_i = \aTCOVS{X_i, \rB{S-\aCTEa{S}}^2} \mbox{~~for all~~} i=1, \dots, n.
\end{align*}
Note that the TCM-based capital allocation can be negative, which shows a diversification benefit.

\begin{proposition} \label{prop:1}
The TCM-based capital allocation satisfies the full allocation property. 
\end{proposition}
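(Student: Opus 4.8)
The plan is to verify the identity $\bTCMa{k}{S} = \sum_{i=1}^{n} \aTCOVS{X_i, (S-\aCTEa{S})^{k-1}}$ directly, using linearity of the tail covariance in its first argument. First I would write the left-hand side as $\bTCMa{k}{S} = \E[(S-\aCTEa{S})^k \mid S > s_\al] = \E[(S-\aCTEa{S})(S-\aCTEa{S})^{k-1} \mid S > s_\al]$. Next I would substitute $S = \sum_{i=1}^n X_i$ in the first factor only, giving $\E[\sum_{i=1}^n (X_i) (S-\aCTEa{S})^{k-1} \mid S > s_\al] - \aCTEa{S}\,\E[(S-\aCTEa{S})^{k-1}\mid S>s_\al]$, and then pull the finite sum outside the conditional expectation by linearity.

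The one genuine point to check is that each term $\E[X_i (S-\aCTEa{S})^{k-1}\mid S>s_\al] - (\text{its recentred version})$ equals the tail covariance $\aTCOVS{X_i, (S-\aCTEa{S})^{k-1}}$. Recall $\aTCOVS{X_i, Y} = \E[X_i Y \mid S>s_\al] - \E[X_i\mid S>s_\al]\,\E[Y\mid S>s_\al]$ with $Y = (S-\aCTEa{S})^{k-1}$. Summing the covariance form over $i$, the cross-terms telescope: $\sum_i \E[X_i \mid S>s_\al] = \E[S\mid S>s_\al] = \aCTEa{S}$, so $\sum_i \E[X_i\mid S>s_\al]\,\E[Y\mid S>s_\al] = \aCTEa{S}\,\E[(S-\aCTEa{S})^{k-1}\mid S>s_\al]$, which is exactly the subtracted term above. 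Hence the two expressions agree. I would also note the base case $k=2$ reduces to $\bTCMa{2}{S} = \sum_i \aTCOVS{X_i,S}$, the known TV allocation of \cite{V04} and \cite{FL06}.

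There is essentially no obstacle here — the result is a one-line consequence of linearity of conditional expectation over a finite index set and the definition of covariance; the only thing to be careful about is the integrability assumption, which is already granted by the paper's standing convention that all relevant moments exist (so $\E[|X_i|\,|S-\aCTEa{S}|^{k-1}\mid S>s_\al]<\infty$, e.g.\ by Hölder's inequality applied to $|X_i|\le |S| + \sum_{j\ne i}|X_j|$ together with the finiteness of $\E[|S|^k]$ and $\E[|X_j|^k]$). I would present the argument as a short display chain from $K = \bTCMa{k}{S}$ to $\sum_{i=1}^n K_i$, invoking Definition \ref{def:3} at both ends.
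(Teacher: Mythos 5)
Your proposal is correct and follows essentially the same route as the paper's proof: expand each tail covariance $\aTCOVS{X_i,(S-\aCTEa{S})^{k-1}}$ by definition, sum over $i$ using linearity of conditional expectation, and use $\sum_i \bCES{X_i}{S}=\aCTEa{S}$ so that the cross-terms collapse and the sum equals $\aTCMS{k}$. The only difference is cosmetic (you work from $K$ toward $\sum_i K_i$ rather than the reverse), so no further comment is needed.
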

\begin{proof} 
Let $S_\al = S - \mathrm{CTE}_\al(S)$. We have
\begin{align*}
\iSum K_i
=&~ \iSum \aTCOVS{X_i, \rB{S-\aCTEa{S}}^{k-1}} \\
=&~ \iSum \rB{ \bCES{X_i S_\al^{k-1}}{S}
    - \bCES{X_i}{S} \bCES{S_\al^{k-1}}{S} } \\
=&~ \bCES{\iSum X_iS_\al^{k-1}}{S}
    - \bCES{S_\al^{k-1}}{S} \iSum \bCES{X_i}{S} \\
=&~ \cTCMb{S}{S}{k-1} - \cTCMb{\aCTEa{S}}{S}{k-1} \\
=&~ \aTCMS{k}=K. \qedhere
\end{align*}
\end{proof}

As the CTE alone does not adequately characterise the tail behaviour of losses,  it is worth considering linear combinations of the CTE-based and TCM-based capital allocation methods. For instance, an overall capital reserve of 
\begin{align} \label{eq:2}
K=m_1 \aCTEa{S} + m_2 \mathrm{TV}_\al(S) + m_3 \bTCMa{3}{S}, 
\end{align} 
for some $m_1, m_2, m_3 \in \R_+$, not only measures the average tail loss, but also takes into account other characteristics of the tail region such as dispersion and asymmetry. The corresponding capital allocation is feasible due to linearity. The combination allows a lot of flexibility to the agent when deciding their portfolio management priorities. The idea of combining the CTE and TV has been considered by, e.g., \cite{FL06}, \cite{IL15}, and \cite{KK19} as a premium principle for the entire portfolio, with only \cite{FL06} applying it to capital allocation. We extend this idea by including the 3rd order TCM as well, and demonstrate it via a real-data analysis in Section \ref{sec:5}.

\begin{remark} \label{rem:3}
The Euler allocation principle is a popular capital allocation method. This is because it possesses the full allocation property as well as other desirable properties, and it aligns with concepts from other disciplines such as economics and game theory (see Section 2.2 of \cite{T08} and references therein for detailed discussions). While the TCM does not fulfil the conditions for the Euler allocation principle, we can modify it by ``rooting'' the TCM so that the Euler allocation principle can be applied under mild assumptions of the random losses, with the following allocation outcome:
\begin{align*} 
K = \bTCMa{k}{S}^{\frac{1}{k}} \mbox{~~and~~}
K_i = \frac{ \aTCOVS{X_i, (S-\aCTEa{S})^{k-1} } }{
    \bTCMa{k}{S}^{1-\frac{1}{k}} } \mbox{~~for all~~} i=1, \dots, n;
\end{align*}
see Appendix \ref{app:A} for the derivation of this result. Clearly, switching between \eqref{eq:1} and the above has no additional computational difficulty. Moving forward, \eqref{eq:1} in Definition \ref{def:3} will be used for its neater expressions. The case when $k=2$, together with the CTE-based capital allocation, is studied in \cite{FL06} and \cite{G21} as the risk-adjusted tail value-at-risk allocation method. 
\end{remark}

\subsection{Normal mean-variance mixture distributions}

The following definition follows Definition 3.11 of \cite{M15}.

\begin{definition} \label{def:4}
A random vector $\mathbf{X}$ is said to follow an $n$-dimensional normal mean-variance mixture (NMVM) distribution if
\begin{align*}
\mathbf{X} \overset{d}{=} \mathbf{m}(\Th) + \sqrt{\Th}A\mathbf{Z},
\end{align*}
where
 \begin{enumerate}[(i)]
  \item $\mathbf{Z} \sim MVN_k(\mathbf{0}, \mathbf{I_k})$ is a $k$-dimensional standard multivariate normal random vector with the identity variance-covariance matrix;
  \item $A \in \R^{n \times k}$ is a matrix;
  \item $\Th$ is a non-negative random variable, independent of $\mathbf{Z}$, with density function $\pi(\th)$ for $\th>0$. It is referred to as the mixing random variable; 
  \item $\mathbf{m} : [0, \infty) \rightarrow \R^n$ is a measurable function of $\Th$. 
\end{enumerate}
\end{definition}
Throughout this paper, we assume that $\mathbf{m}(\Th) = \bm{\mu} + \Th \bm{\gm}$ where $\bm{\mu}, \bm{\gm} \in\R^n$. Let $\Sigma := AA' = (\SGij)_{1 \le i,j \le n}$. We will specify an NMVM random variable (or its distribution) via the parameters $\bm{\mu}, \bm{\gm}$, and $\Sigma$, and the mixing random variable $\Theta$. For a univariate NMVM random variable, we write the parameters as $\mu:=\bm{\mu}$, $\gamma:=\bm{\gm}$, and $\sg^2:=\Sigma$. 
 
We present below some useful properties of the NMVM distribution. First, it is clear that
\begin{align*}
\mathbf{X} \mid \Th = \th \sim  MVN_n\rB{\mathbf{m}(\th), \th \Sigma)}.
\end{align*}
Second, the class of NMVM distributions is closed under linear transformations (see, e.g., Proposition 2.1 of \cite{KK19}). This is a useful property with many financial applications, such as when portfolio weights are concerned. In particular, it follows that $S=X_1+\dots+X_n$ is an NMVM random variable with mixing random variable $\Th$ and parameters $\mu= \mathbf{1'} \bm{\mu}, \sg^2= \mathbf{1' \Sigma 1},$ and $\gm=\mathbf{1'} \bm{\gm}$. In general, NMVM distributions are not elliptical, and $\bm{\mu}$ and $\Sigma$ are not the mean vector and covariance matrix of $\mathbf{X}$.

The NMVM class contains many important distributions, one of which is the generalised hyperbolic (GH) distribution, where $\Th$ follows a generalised inverse Gaussian (GIG) distribution with three parameters $\lm\in\R$ and $\chi,\psi\ge 0$. We denote a $n$-dimensional multivariate GH distribution by $MGH_n(\lm,\ \chi,\ \psi,\ \bm{\mu},\ \bm{\Sigma},\ \bm{\gamma})$. The density of the GIG distribution is given by
\begin{align*}
\PIt = \frac{\chi^{-\lm} (\sqrt{\chi \psi})^\lm}
            {2 \mathcal{K}_{\lm}(\chi \psi) } 
    \th^{\lm-1} \exp \rB{-\frac{1}{2}(\chi \th^{-1} + \psi \th)}, 
  ~~ \th>0 ,
\end{align*}
where $\mathcal{K}_{\lm}$ is a modified Bessel function of the second kind with index $\lm$:
\begin{align*}
\mathcal{K}_{\lm}(z) = \frac{1}{2} 
  \INTx{ x^{\lm-1} e^{-\frac{1}{2} z (x^{-1} + x)} }.
\end{align*}
The parameters need to satisfy one of: $\chi > 0, \psi \ge 0$ when $\lm < 0$; $\chi > 0, \psi > 0$ when $\lm = 0$; $\chi \ge 0, \psi > 0$ when $\lm > 0$. The GIG distribution itself contains the Gamma and inverse Gamma as special cases, and the GH class has several notable members, as listed in the introduction. For more information about the GIG and GH distributions, refer to \cite{J82} or Section 6.2.3 of \cite{M15}.

\section{Tail moments of univariate NMVM distributions} \label{sec:3}

In this section, we derive an analytical solution to the TCM of the aggregate loss $S$ faced by the agent as outlined at the start of Section \ref{sec:2.2}. The model setup and assumptions for Sections \ref{sec:3} and \ref{sec:4} are as follows:
\begin{enumerate} [(i)]
  \item The losses $X_1, \dots, X_n$ follow the multivariate NMVM distribution  (Definition \ref{def:4});
  \item As the NMVM model is closed under linear combinations, the aggregate loss $S$ follows a univariate NMVM distribution with mixing random variable $\Th$ and parameters $\mu= \mathbf{1'} \bm{\mu}$, $\sg^2= \mathbf{1' \Sigma 1}$, and $\gm = \mathbf{1'} \bm{\gm}$;
  \item Fix $l \in \N_0$. We assume that there always exists some $\Cl > 0$ such that $\pi^{*(l)}(\th) := (\Cl)^{-1} \th^l \PIt$ is a valid density function. Let $\Ca$ and $\PIta$ (resp.~$\Cb$ and $\PItb$) be the shorthand notation of $\Cl$ and $\pi^{*(l)}(\th)$ for $l=1$ (resp.~$l=2$).
  \item Denote by $\SL$ an NMVM random variable with the same parameters as $S$, except that the density of its mixing random variable is $\pi^{*(l)}(\th)$. Define $\ALl=1-\Fbar_{\SL}(s_\al)$ for some $\al \in (0,1)$. Let $S^*$ and $\ALa$ (resp.~$S^{**}$ and $\ALb$) be the shorthand notation of $\SL$ and $\ALl$ for $l=1$ (resp.~$l=2$). 
\end{enumerate}

Based on (ii) above, the task in this section reduces to finding the TCM of a univariate NMVM distribution. The solution is achieved through a recursive approach. As a necessary step in calculating the TCM, we also provide recursive formulas for the TM. As a direct consequence, we obtain an explicit formula for the 2nd order TM and TCM of $S$, studied by \cite{KK19}, using different techniques.
 
We first provide the following results, which will be useful in the derivation of Theorem \ref{thm:1}. 

\begin{lemma} {\rm \citep[][Example 3.1]{LV16}} \label{lmm:1}
Fix $k \in \N$, $\mu \in \R$, and $c,\ \sg \in \R_+$. For a random variable $X \sim N(\mu, \sg^2)$, the $k$-th order TM of $X$ follows the recursive relationship
\begin{align}
\E \sB{X^k \mid X>c}
= \sg^2 c^{k-1} \frac{f_X(c)}{\Fbar_X(c)} 
  + \mu \E \sB{X^{k-1} \mid X>c} + (k-1) \sg^2 \E \sB{X^{k-2} \mid X>c} .
\end{align}
\end{lemma}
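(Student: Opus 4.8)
The plan is to prove the recursion by integration by parts, exploiting the key differential identity satisfied by the normal density. Write $\varphi_{\mu,\sigma^2}$ for the $N(\mu,\sigma^2)$ density and recall that $f_X'(x) = -\sigma^{-2}(x-\mu) f_X(x)$, equivalently $(x-\mu) f_X(x) = -\sigma^2 f_X'(x)$. Starting from
\begin{align*}
\E\sB{X^k \mid X>c} \Fbar_X(c) = \int_c^\infty x^k f_X(x)\,dx,
\end{align*}
I would split $x^k = x^{k-1}\cdot x$ and write $x = (x-\mu) + \mu$, so that
\begin{align*}
\int_c^\infty x^k f_X(x)\,dx = \int_c^\infty x^{k-1}(x-\mu) f_X(x)\,dx + \mu \int_c^\infty x^{k-1} f_X(x)\,dx.
\end{align*}
The second term is immediately $\mu\, \E\sB{X^{k-1}\mid X>c}\Fbar_X(c)$, matching the middle term of the claimed recursion.

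For the first term I would substitute $(x-\mu)f_X(x) = -\sigma^2 f_X'(x)$ and integrate by parts:
\begin{align*}
\int_c^\infty x^{k-1}(x-\mu) f_X(x)\,dx = -\sigma^2 \int_c^\infty x^{k-1} f_X'(x)\,dx = \sigma^2\Bigl( c^{k-1} f_X(c) + (k-1)\int_c^\infty x^{k-2} f_X(x)\,dx \Bigr),
\end{align*}
where the boundary term at $+\infty$ vanishes because $x^{k-1}f_X(x)\to 0$. Dividing through by $\Fbar_X(c)$ turns the three pieces into $\sigma^2 c^{k-1} f_X(c)/\Fbar_X(c)$, $\mu\,\E\sB{X^{k-1}\mid X>c}$, and $(k-1)\sigma^2\,\E\sB{X^{k-2}\mid X>c}$, which is exactly the stated identity. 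A small remark is needed for $k=1$: then $x^{k-2}=x^{-1}$, but the factor $(k-1)=0$ kills that term, and one checks the formula reduces to the familiar Mills-ratio expression $\E[X\mid X>c] = \mu + \sigma^2 f_X(c)/\Fbar_X(c)$, so the convention causes no issue.

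The only genuine point requiring care — rather than an obstacle — is justifying the integration by parts and the vanishing of the boundary term, i.e.\ that $x^{k-1} f_X(x) \to 0$ as $x\to\infty$ and that all integrals involved are finite; this follows from the Gaussian tail decay and the standing assumption that the relevant moments exist. Everything else is a routine one-line computation, so I would keep the write-up short, essentially displaying the splitting, the substitution $(x-\mu)f_X = -\sigma^2 f_X'$, the integration by parts, and the division by $\Fbar_X(c)$. Since the result is quoted from \cite{LV16}, an alternative is simply to cite it, but the self-contained derivation above is short enough to include for completeness.
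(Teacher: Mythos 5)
Your derivation is correct: the splitting $x=(x-\mu)+\mu$, the Gaussian identity $(x-\mu)f_X(x)=-\sg^2 f_X'(x)$, the integration by parts with vanishing boundary term at $+\infty$, and the observation that the $(k-1)$ factor disposes of the $x^{k-2}$ term when $k=1$ together give exactly the stated recursion. The paper itself offers no proof of this lemma --- it is quoted from \cite{LV16} (Example 3.1) --- so there is nothing to compare against; your self-contained argument is the standard one and would serve as a complete substitute for the citation.
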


\begin{lemma} \label{lmm:2}
For some fixed $k \in \N$, $l \in \N_0$, and $\al \in (0,1)$, we have 
\begin{align*}
\bCES{(\SL)^k}{\SL}
= \frac{1}{1-\ALl} \INTth{ 
    \Fbar_{\SL | \th}(s_\al) \bCESt{(\SL)^k}{\SL} \pi^{*(l)}(\th) } .
\end{align*}
\end{lemma}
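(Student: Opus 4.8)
The plan is to prove this by conditioning on the mixing random variable $\Th$ and using the tower property of conditional expectation. The key observation is that $\SL$ is an NMVM random variable whose mixing variable has density $\pi^{*(l)}$, so conditional on $\Th=\th$, the law of $\SL$ is the (known) normal law $N\bigl(\mathbf{1'}\bm{\mu}+\th\,\mathbf{1'}\bm{\gm},\ \th\,\mathbf{1'}\bm{\Sigma}\mathbf{1}\bigr)$. I would start from the definition
\begin{align*}
\bCES{(\SL)^k}{\SL}
= \E\sB{(\SL)^k \mid \SL > s_\al}
= \frac{\E\sB{(\SL)^k \,\id_{\{\SL > s_\al\}}}}{\P(\SL > s_\al)},
\end{align*}
and note that the denominator is $1-\ALl$ by the definition of $\ALl$ in assumption (iv).

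Next I would rewrite the numerator by conditioning on $\Th$: since $\Th$ has density $\pi^{*(l)}(\th)$ under the law governing $\SL$, the tower property gives
\begin{align*}
\E\sB{(\SL)^k \,\id_{\{\SL > s_\al\}}}
= \INTth{ \E\sB{(\SL)^k \,\id_{\{\SL > s_\al\}} \,\big|\, \Th=\th}\, \pi^{*(l)}(\th) }.
\end{align*}
For the inner expectation, I would factor out the indicator using the conditional law: $\E\sB{(\SL)^k \id_{\{\SL>s_\al\}} \mid \Th=\th} = \P(\SL > s_\al \mid \Th=\th)\,\E\sB{(\SL)^k \mid \SL > s_\al, \Th=\th} = \Fbar_{\SL\mid\th}(s_\al)\,\bCESt{(\SL)^k}{\SL}$. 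Substituting this back and dividing by $1-\ALl$ yields exactly the claimed identity.

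The only genuine subtlety, and the step I would be most careful about, is justifying the interchange of expectation and integral (Fubini/Tonelli) and making sure the conditional quantities are well-defined — specifically that $\Fbar_{\SL\mid\th}(s_\al)>0$ for $\pi^{*(l)}$-almost every $\th$ so the conditional expectation $\bCESt{(\SL)^k}{\SL}$ makes sense, and that the relevant $k$-th moment is finite (which is the standing assumption in the Notation section, inherited via assumption (iii) since $\pi^{*(l)}$ is a valid density). Everything else is a routine application of the tower property and the fact that conditioning $\SL$ on $\Th=\th$ gives a normal distribution; no computation specific to the normal law is actually needed at this stage, since the lemma is stated in terms of the (as yet unevaluated) conditional tail moment $\bCESt{(\SL)^k}{\SL}$.
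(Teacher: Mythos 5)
Your proposal is correct and follows essentially the same route as the paper's proof: the paper writes out the tail integral of $s^k f_{\SL}(s)$, decomposes the density through the mixing variable, swaps the order of integration, and multiplies and divides by $\Fbar_{\SL\mid\th}(s_\al)$, which is exactly your tower-property argument expressed in density form. Your added care about Fubini/Tonelli and the positivity of $\Fbar_{\SL\mid\th}(s_\al)$ (automatic here since the conditional law is normal) is a minor refinement the paper leaves implicit.
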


\begin{proof}
Let random variable $\Th^{*(l)}$ have density $\pi^{*(l)}(\th)$, with $\th>0$. We have 
\begin{align*}
\bCES{(\SL)^k}{\SL}
=&~ \frac{1}{1-\ALl} \aINTSa{ s^k f_{\SL}(s) } \\
=&~ \frac{1}{1-\ALl} \aINTSa{  s^k \INTth{ f_{\SL , \Th^{*(l)}}(s,\theta)  } } \\
=&~ \frac{1}{1-\ALl} \aINTSa{ \INTth{ 
    s^k f_{\SL | \th}(s) \pi^{*(l)}(\th) } } \\
=&~ \frac{1}{1-\ALl} \INTth{ \Fbar_{\SL | \th}(s_\al) \rB{
    \frac{1}{\Fbar_{\SL | \th}(s_\al)} \aINTSa{ 
      s^k f_{\SL | \th}(s) } } \pi^{*(l)}(\th) } \\
=&~ \frac{1}{1-\ALl} \INTth{ 
    \Fbar_{\SL | \th}(s_\al) \E \sB{(\SL)^k \mid \SL>s_\al, \Th^{*(l)}} = \th} \pi^{*(l)}(\th) . \qedhere
\end{align*}
\end{proof}

Now we state our main result for the TM and TCM of the NMVM random variable $S$. 

\begin{theorem} \label{thm:1} 
For $k\in\N$, the $k$-th order TM and TCM of the NMVM random variable $S$ at confidence level $\al\in(0,1)$ can be found recursively by
\begin{align} \label{eq:4} 
\bCES{S^k}{S} 
=&~ \mu \bCES{S^{k-1}}{S} + \FRc \sg^2 s_\al^{k-1} h_{S^*}(s_\al)
\nonumber \\ &~~
  + \FRc \rB{ 
    \gm \bCES{(S^*)^{k-1}}{S^*}
    + (k-1) \sg^2 \bCES{(S^*)^{k-2}}{S^*} 
  } , 
\end{align}
where 
\begin{align} \label{eq:5}
\bCES{(\SL)^k}{\SL} 
=&~ \mu \bCES{(\SL)^{k-1}}{\SL} + \FRl \sg^2 s_\al^{k-1} h_{\SLl}(s_\al) 
\nonumber \\ &~~ 
  + \FRl \Big( \gm \bCES{(\SLl)^{k-1}}{\SLl}
\nonumber \\ &~~~~~~~~~~~~~~~~~~~~~~~~~~~~
    + (k-1) \sg^2 \bCES{(\SLl)^{k-2}}{\SLl} \Big) , 
\end{align}
with \eqref{eq:4} being a special case of \eqref{eq:5} with $l=0$, and 
\begin{align} \label{eq:6}
\bTCMa{k}{S} = \sum^k_{j=0} \binom{k}{j} \bCES{S^{k-j}}{S} (-\aCTEa{S})^j . 
\end{align}
\end{theorem}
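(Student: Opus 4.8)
The plan is to prove the three-part statement by reducing everything to Lemma \ref{lmm:1} via the mixing representation in Lemma \ref{lmm:2}. Equations \eqref{eq:5} is the genuine content; \eqref{eq:4} is its $l=0$ specialisation (noting $S^{*(0)}=S$, $\pi^{*(0)}=\pi$, $c^{*(0)}=1$, $\alpha^{*(0)}=\alpha$), and \eqref{eq:6} is an elementary binomial expansion once \eqref{eq:5} is in hand. So I would organise the proof as: first establish \eqref{eq:5}, then remark that \eqref{eq:4} follows by setting $l=0$, then derive \eqref{eq:6}.

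For \eqref{eq:5}, the first step is to apply Lemma \ref{lmm:2} to write $\bCES{(\SL)^k}{\SL}$ as $(1-\ALl)^{-1}\int_0^\infty \Fbar_{\SL\mid\th}(s_\al)\,\bCESt{(\SL)^k}{\SL}\,\pi^{*(l)}(\th)\,\d\th$. Since $\SL\mid\Th^{*(l)}=\th\sim N(\mu+\th\gm,\th\sg^2)$, I would apply Lemma \ref{lmm:1} with $\mu\rightsquigarrow \mu+\th\gm$, $\sg^2\rightsquigarrow\th\sg^2$, $c=s_\al$, to expand the inner conditional TM into three pieces: a hazard-type term $\th\sg^2 s_\al^{k-1} f_{\SL\mid\th}(s_\al)/\Fbar_{\SL\mid\th}(s_\al)$, a term $(\mu+\th\gm)\bCESt{(\SL)^{k-1}}{\SL}$, and $(k-1)\th\sg^2\bCESt{(\SL)^{k-2}}{\SL}$. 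Substituting back and distributing the $\Fbar_{\SL\mid\th}(s_\al)\pi^{*(l)}(\th)$ weight, the first piece collapses the hazard ratio and leaves $\sg^2 s_\al^{k-1}(1-\ALl)^{-1}\int_0^\infty \th\, f_{\SL\mid\th}(s_\al)\pi^{*(l)}(\th)\,\d\th$. The key bookkeeping identity is $\th\,\pi^{*(l)}(\th) = (\Cl)^{-1}\th^{l+1}\pi(\th) = (\Cll/\Cl)\,\pi^{*(l+1)}(\th)$, so any integral weighted by $\th\,\pi^{*(l)}(\th)$ becomes $(\Cll/\Cl)$ times the same integral weighted by $\pi^{*(l+1)}(\th)$. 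Applying this, the first piece becomes $\FRl\,\sg^2 s_\al^{k-1}\,f_{\SLl}(s_\al)/\Fbar_{\SLl}(s_\al) = \FRl\,\sg^2 s_\al^{k-1}\,h_{\SLl}(s_\al)$ after recognising $\int_0^\infty f_{\SLl\mid\th}(s_\al)\pi^{*(l+1)}(\th)\,\d\th = f_{\SLl}(s_\al)$ and $1-\ALll=\Fbar_{\SLl}(s_\al)$; similarly the $\th\gm$ part of the second piece and the $(k-1)\th\sg^2$ third piece each pick up a factor $\FRl$ and turn into $\bCES{(\SLl)^{k-1}}{\SLl}$ and $\bCES{(\SLl)^{k-2}}{\SLl}$ respectively (again using Lemma \ref{lmm:2} in reverse at level $l+1$), while the bare-$\mu$ part of the second piece, carrying weight $\Fbar_{\SL\mid\th}(s_\al)\pi^{*(l)}(\th)$, reassembles via Lemma \ref{lmm:2} into $\mu\,\bCES{(\SL)^{k-1}}{\SL}$. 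Collecting the four reassembled terms gives exactly \eqref{eq:5}.

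For \eqref{eq:6}, I would simply expand $(S-\aCTEa{S})^k$ by the binomial theorem inside the conditional expectation $\E[\,\cdot\mid S>s_\al]$, using linearity and that $\aCTEa{S}$ is a constant, to get $\sum_{j=0}^k\binom{k}{j}\bCES{S^{k-j}}{S}(-\aCTEa{S})^j$; this needs only the integrability assumption $\E[|S|^k]<\infty$ to justify term-by-term expectation.

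The main obstacle, and the place to be careful, is the level-shifting bookkeeping in \eqref{eq:5}: correctly tracking which terms carry the weight $\Fbar_{\SL\mid\th}(s_\al)\pi^{*(l)}(\th)$ (the bare-$\mu$ term, which stays at level $l$) versus which carry an extra factor of $\th$ (the hazard, $\gm$, and $(k-1)\sg^2$ terms, which shift to level $l+1$ and produce the constant $\FRl$), and then recognising each resulting integral as a tail quantity of $\SLl$ by re-applying Lemma \ref{lmm:2}. One should also verify that the normalising constants $\Cll$ exist whenever invoked — this is guaranteed by assumption (iii) of the model setup — and that $\E[|{\SLl}|^{k}]<\infty$ so the tail moments on the right-hand side are well defined; the latter follows from the moment hypothesis on $S$ together with the fact that the $\SLl$ differ from $S$ only through a tilting of the mixing density. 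Everything else is routine manipulation of Gaussian conditional moments and Fubini.
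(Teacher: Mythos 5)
Your proposal is correct and follows essentially the same route as the paper: Lemma \ref{lmm:1} applied to the normal conditional law given $\Th=\th$, Lemma \ref{lmm:2} to integrate over the mixing variable, and the tilting identity $\th\,\pi^{*(l)}(\th)=(\Cll/\Cl)\,\pi^{*(l+1)}(\th)$ to shift to level $l+1$, with \eqref{eq:6} by binomial expansion. The only (cosmetic) difference is that you prove \eqref{eq:5} directly at general level $l$ and obtain \eqref{eq:4} as the $l=0$ case, whereas the paper proves the $l=0$ case explicitly and then notes that the general case follows by replacing $\pi$ with $\pi^{*(l)}$.
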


\begin{proof}
We will first prove \eqref{eq:4}. We begin with applying Lemma \ref{lmm:1} to obtain
\begin{align*}
\bCESt{S^k}{S} 
=&~ \th \sg^2 s_\al^{k-1} \frac{f_{\ST}(s_\al)}{\Fbar_{\ST}(s_\al)}
  + (\mu + \th \gm) \bCESt{S^{k-1}}{S} \\
  &~~ + (k-1) \th \sg^2 \bCESt{S^{k-2}}{S} .  
\end{align*}
Then, applying the above result and Lemma \ref{lmm:2} (with $l=0$) gives
\begin{align*}
\bCES{S^k}{S} 
=&~ \FRa \INTth{\bCESt{S^k}{S} \FBARSt \PIt} \\
=&~ \FRa \INTth{
    \mu \bCESt{S^{k-1}}{S} \FBARSt \PIt
    + \sg^2 s_\al^{k-1} f_{\St}(s_\al) (\th \PIt)
\\ &~~~~~~~~~~~~~~~~
    + \gm \bCESt{S^{k-1}}{S} \FBARSt (\th \PIt)
\\ &~~~~~~~~~~~~~~~~
    + (k-1) \sg^2 \bCESt{S^{k-2}}{S} \FBARSt (\th \PIt)
  } \\
=&~ \mu \bCES{S^{k-1}}{S} 
  + \Ca \FRa \sg^2 s_\al^{k-1} f_{S^*}(s_\al) \cdot \frac{\Fbar_{S^*}(s_\al)}{\Fbar_{S^*}(s_\al)}
\\ &~~
  + \FRc \gm \bCES{(S^*)^{k-1}}{S^*}
  + \FRc (k-1) \sg^2 \bCES{(S^*)^{k-2}}{S^*} \\
=&~ \FRc \sg^2 s_\al^{k-1} h_{S^*}(s_\al) 
  + \mu \bCES{S^{k-1}}{S} 
\\ &~~
  + \FRc \gm \bCES{(S^*)^{k-1}}{S^*}
  + \FRc (k-1) \sg^2 \bCES{(S^*)^{k-2}}{S^*} .
\end{align*}
Equation \eqref{eq:5} is proven in the same way, as $\SL$ and $S$ are both NMVM random variables, and $(\SL)^*=\SLl$ by definition. Since $\PIt$ is an arbitrary density function, we can replace $\PIt$ with $\pi^{*(l)}(\th)$. Consequently, $S$ (resp.~$S^*$) is replaced with $\SL$ (resp.~$\SLl$), and the rest follows.

Lastly, directly applying binomial expansion onto the TCM of $S$ completes the proof for \eqref{eq:6}.
\end{proof}

\begin{remark}
If we further assume that $\Th \sim GIG(\lm, \chi, \psi)$, then $S \sim GH(\lm, \chi, \psi, \mu, \sg, \gm)$ (see Definition \ref{def:4}). This gives $S^* \sim GH(\lm+1, \chi, \psi, \mu, \sg, \gm)$ ((25) to (27) of \cite{KK19}). This is useful for Section \ref{sec:5}, where recursive formulas for the TM of GH distributed random variables are computed.
\end{remark}

The following corollary presents a particularly interesting case of Theorem \ref{thm:1} when orders of moment are 1 and 2; these results were first obtained by \cite{KK19} (see their Theorem 3.1, Proposition 5.1, and Theorem 5.2).

\begin{corollary} \label{cor:1}
The CTE of the NMVM random variable $S$ at confidence level $\alpha\in(0,1)$ is given by 
\begin{align} \label{eq:7}
\aCTEa{S} 
= \mu + \Ca \rB{\FRb} \rB{\gm + \sg^2 h_{S^*}(s_\al)} ,
\end{align}
and the $2$-nd order TM and TCM of $S$ are given by 
\begin{align} \label{eq:8}
\bTMa{2}{S} 
=&~ \mu^2 + \FRc \rB{\sg^2 + 2\mu \gm + \sg^2(s_\al + \mu) \aHS{*}}
\nonumber \\ &~~
  + \FRd \rB{\gm^2 + \gm \sg^2 \aHS{**}},  
\end{align}
and 
\begin{align} \label{eq:9} 
\mathrm{TV}_\al(S)
=&~ \FRc \sg^2 \rB{1 + (s_\al - \mu) \aHS{*}}
  + \FRd \gm \rB{\gm + \sg^2 \aHS{**}} \nonumber \\
  &~~ - \rB{\FRc (\gm + \sg^2 \aHS{*})}^2 . 
\end{align}
\end{corollary}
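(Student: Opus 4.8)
The plan is to obtain Corollary \ref{cor:1} as a direct specialisation of Theorem \ref{thm:1}, simply by substituting small values of $k$ into the recursions \eqref{eq:4}--\eqref{eq:6} and collecting terms. The only genuine work is bookkeeping: keeping track of the three ``levels'' of NMVM random variables $S$, $S^*$, $S^{**}$ (i.e.\ $\SL$ for $l=0,1,2$) and the constants $\Ca, \Cb$ and tail levels $\ALa, \ALb$, and recognising that zeroth-order conditional tail expectations are identically $1$.

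First, for \eqref{eq:7}, I would set $k=1$ in \eqref{eq:4}. Then $\bCES{S^{0}}{S}=1$ and $\bCES{(S^*)^{0}}{S^*}=1$, while the term with coefficient $(k-1)=0$ drops out and the factor $s_\al^{k-1}=s_\al^{0}=1$. This leaves
\[
\aCTEa{S}=\bCES{S}{S}=\mu+\FRc\bigl(\gm+\sg^2 h_{S^*}(s_\al)\bigr),
\]
which is exactly \eqref{eq:7} once one writes $\FRc=\Ca/(1-\ALa)$ as $\Ca(1-\ALa)^{-1}$; note $\FRb$ in the statement is shorthand for $(1-\ALa)^{-1}$, so the expressions match.

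Second, for \eqref{eq:8}, I would set $k=2$ in \eqref{eq:4}. Now $s_\al^{k-1}=s_\al$, the term $(k-1)\sg^2\bCES{(S^*)^{0}}{S^*}=\sg^2$ appears, and $\bCES{S^{1}}{S}=\aCTEa{S}$ and $\bCES{(S^*)^{1}}{S^*}=\mathrm{CTE}$ of the level-1 variable. For the latter I invoke \eqref{eq:5} with $l=1$, $k=1$, which by the same zeroth-order simplification yields $\bCES{S^*}{S^*}=\mu+\FRd\bigl(\gm+\sg^2 h_{S^{**}}(s_\al)\bigr)$; substituting the level-0 expression \eqref{eq:7} for $\mu\,\aCTEa{S}$ and expanding the products gives \eqref{eq:8}. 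The main (mild) obstacle here is purely algebraic: one must correctly multiply out $\mu\cdot\mathrm{CTE}_\al(S)$ and $\FRc\cdot\gm\cdot\bCES{S^*}{S^*}$ and regroup the $\FRc$ and $\FRd$ terms so that the $\sg^2(s_\al+\mu)\aHS{*}$ and $\gm\sg^2\aHS{**}$ pieces land in the claimed form; there is nothing subtle, but sign and coefficient care is needed.

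Third, for \eqref{eq:9}, I would use \eqref{eq:6} with $k=2$: $\mathrm{TV}_\al(S)=\bCES{S^2}{S}-2\aCTEa{S}^2+\aCTEa{S}^2=\bTMa{2}{S}-\aCTEa{S}^2$. Substituting \eqref{eq:8} for $\bTMa{2}{S}$ and \eqref{eq:7} for $\aCTEa{S}$, and simplifying, produces \eqref{eq:9}; in the stated form the $-\bigl(\FRc(\gm+\sg^2\aHS{*})\bigr)^2$ term is $-\bigl(\aCTEa{S}-\mu\bigr)^2$, and the remaining terms come from $\bTMa{2}{S}-\mu^2-2\mu\bigl(\aCTEa{S}-\mu\bigr)$, i.e.\ from completing the square. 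The hard part, to the extent there is one, is again only verifying that the mixed $\mu$-dependent cross terms from $\bTMa{2}{S}$ and from $\aCTEa{S}^2$ cancel exactly, leaving the compact expression \eqref{eq:9}; this is a short but careful computation. All of these steps rely solely on Theorem \ref{thm:1} and the conventions in assumptions (i)--(iv) of Section \ref{sec:3}, so no further ingredients are required.
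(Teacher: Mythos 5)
Your proposal follows essentially the same route as the paper's proof: \eqref{eq:7} from \eqref{eq:4} with $k=1$, \eqref{eq:8} from \eqref{eq:4} with $k=2$ plus an evaluation of $\bCES{S^*}{S^*}$ via \eqref{eq:5} at $l=1$, $k=1$, and \eqref{eq:9} from \eqref{eq:6} with $k=2$, i.e. $\mathrm{TV}_\al(S)=\bTMa{2}{S}-\aCTEa{S}^2$, with the $\mu$-cross-terms cancelling exactly as you describe. One slip to fix: the intermediate formula should be $\bCES{S^*}{S^*}=\mu+\frac{\Cb(1-\ALb)}{\Ca(1-\ALa)}\rB{\gm+\sg^2\aHS{**}}$, since the coefficient delivered by \eqref{eq:5} at $l=1$ is $\frac{(1-\ALb)\Cb}{(1-\ALa)\Ca}$, not $\FRd=\frac{1-\ALb}{1-\al}\Cb$ as you wrote. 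The factor $\FRd$ only emerges in \eqref{eq:8} after multiplication by $\FRc\,\gm$, because $\FRc\cdot\frac{(1-\ALb)\Cb}{(1-\ALa)\Ca}=\FRd$; taken literally, your stated intermediate expression would make the $\gm^2+\gm\sg^2\aHS{**}$ term in \eqref{eq:8} carry $\FRc\FRd$ rather than $\FRd$, so the claimed identity would not follow. With that coefficient corrected, your computation coincides with the paper's proof.
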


\begin{proof}
Equation \eqref{eq:7} is directly obtained from substituting $k=1$ into \eqref{eq:4} in Theorem \ref{thm:1}. Substituting $l=1$, $k=1$ into \eqref{eq:5} in Theorem \ref{thm:1} gives
\begin{align*}
\bCES{S^*}{S^*}
= \mu + \frac{\Cb (1-\ALb)}{\Ca (1-\ALa)} (\gm + \sg^2 \aHS{**}) .
\end{align*}
Applying Theorem \ref{thm:1}, then substituting \eqref{eq:7} and the above result gives
\begin{align*}
\bTMa{2}{S}
=&~ \FRc \sg^2 s_\al \aHS{*} + \mu \bCES{S}{S} \\
  &~~ + \FRc \gm \bCES{S^*}{S^*} + \FRc \sg^2 \\
=&~ \FRc \sg^2 s_\al \aHS{*} 
  + \mu \rB{\mu + \FRc (\gm + \sg^2 \aHS{*})} 
\\ &~~
  + \FRc \gm \rB{ 
    \mu + \frac{\Cb (1-\ALb)}{\Ca (1-\ALa)} (\gm + \sg^2 \aHS{**}) }
  + \FRc \sg^2 ,
\end{align*}
and we obtain \eqref{eq:8} after routine algebraic simplification. Subsequently, \eqref{eq:9} is obtained by
\begin{align*}
\mathrm{TV}_\al(S)
=&~ \bCES{S^2}{S} - \bCES{S}{S}^2 \\
=&~ \mu^2 + \FRc \rB{\sg^2 + 2\mu \gm + \sg^2(s_\al + \mu) \aHS{*}}
\\ &~~
  + \FRd \rB{\gm^2 + \gm \sg^2 \aHS{**}} 
  - \rB{\mu + \FRc (\gm + \sg^2 \aHS{*})}^2 \\
=&~ \mu^2 - \mu^2 
  + \FRc \rB{\sg^2 + 2 \mu \gm - 2 \mu \gm + \sg^2 (s_\al + \mu - 2 \mu) \aHS{*}}
\\ &~~
  + \FRd \rB{\gm^2 + \gm \sg^2 \aHS{**}} - \rB{\FRc (\gm + \sg^2 \aHS{*})}^2 \\
=&~ \FRc \sg^2 \rB{1 + (s_\al - \mu) \aHS{*}}
\\ &~~
  + \FRd \gm \rB{\gm + \sg^2 \aHS{**}} - \rB{\FRc (\gm + \sg^2 \aHS{*})}^2 . \qedhere
\end{align*}
\end{proof}

\section{Capital allocation for multivariate NMVM distributions}  \label{sec:4}

In Section \ref{sec:3},  the TCM of the aggregate loss $S$ has been derived. Next, we proceed to study the TCM-based capital allocation method as defined in Definition \ref{def:3}. Again, we obtain an explicit formula for the 2nd-order TCM-based capital allocation as a special case.

Recall the same model setup as in Section \ref{sec:3}. In addition, let $\SGS := \iSum \SUMj \SGij$ and $\SGiS := \SUMj \SGij$ for $i \in \cB{1,\dots,n}$. We also denote the NMVM random vector parameters by $\bm{\mu} = (\mu_1, \dots, \mu_n)' \in \R^n$, $\bm{\gm} = (\gm_1, \dots, \gm_n)' \in \R^n$, and $\bm{\Sigma} = (\SGij)_{1 \le i,j \le n} \in \R^{n \times n}$. We start by stating some useful results.

\begin{lemma} {\rm \citep[][Theorem 4.1]{KK19}} \label{lmm:3}
Consider the multivariate NMVM random vector $(X_1, \dots, X_n)$ with mixing random variable $\Th$ and parameters $\bm{\mu}$, $\bm{\gm}$, and $\bm{\Sigma}$, and the aggregate loss $S$. Under the CTE-based capital allocation with confidence level $\al \in (0,1)$, the capital allocated to $X_i$ for all $i = 1, \dots, n$, is given by
\begin{align*}
K_i = \bCES{X_i}{S} = \AIa + \AIb \bCES{S}{S} + \AIc \FRc ,
\end{align*}
where the coefficients $\AIa$, $\AIb$, and $\AIc$ are defined as
\begin{align*}
\AIa = \MUi - \AIb \SUMj \MUi ,~~
\AIb = \frac{\SUMj \SGij}{\SGS} ,~~\mbox{and}~~
\AIc = \GMi - \AIb \SUMj \GMj .
\end{align*}
\end{lemma}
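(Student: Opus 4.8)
The plan is to condition on the mixing variable $\Th$, use the conditional joint normality of $(X_i,S)$ to obtain a linear regression of $X_i$ on $(S,\Th)$, and then integrate out the tail event by passing to the tilted mixing density $\PIta$. Concretely, I would first note that, given $\Th=\th$, the vector $\mathbf X$ is multivariate normal, so $(X_i,S)$ is bivariate normal with conditional means $\E\sB{X_i\mid\Th=\th}=\MUi+\th\GMi$ and $\E\sB{S\mid\Th=\th}=\mu+\th\gm$ (here $\mu=\SUMj\MUj$ and $\gm=\SUMj\GMj$ as in the model setup), conditional variance $\var(S\mid\Th=\th)=\th\SGS$, and conditional covariance $\cov(X_i,S\mid\Th=\th)=\th\SGiS$. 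The bivariate-normal regression formula then gives
\begin{align*}
\aCEEst{X_i}
= \MUi+\th\GMi+\frac{\th\SGiS}{\th\SGS}\rB{s-\mu-\th\gm}
= \AIa+\AIb\,s+\AIc\,\th ,
\end{align*}
since cancelling $\th$ in the ratio $\th\SGiS/(\th\SGS)$ and collecting the constant, $s$- and $\th$-terms produces precisely the coefficients $\AIa,\AIb,\AIc$.

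Next I would evaluate this regression at $(S,\Th)$ and apply the tower property. Since $\ii\cB{S>s_\al}$ is $\sigma(S)$-measurable, hence $\sigma(S,\Th)$-measurable,
\begin{align*}
\bCES{X_i}{S}
&= \FRa\,\E\sB{X_i\,\ii\cB{S>s_\al}}
= \FRa\,\E\sB{\rB{\AIa+\AIb S+\AIc\Th}\,\ii\cB{S>s_\al}} \\
&= \AIa+\AIb\,\bCES{S}{S}+\FRa\,\AIc\,\E\sB{\Th\,\ii\cB{S>s_\al}} .
\end{align*}
Thus the whole problem reduces to computing the single expectation $\E\sB{\Th\,\ii\cB{S>s_\al}}$.

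The crux is this last expectation. Writing it as $\INTth{\th\,\FBARSt\,\PIt}$ and using the identity $\th\PIt=\Ca\PIta$, together with the fact that $S^*$ and $S$ share the parameters $\mu,\sg^2,\gm$ and therefore have identical conditional laws given their mixing variables (so $\FBARSt=\Fbar_{S^*\mid\th}(s_\al)$), I obtain
\begin{align*}
\E\sB{\Th\,\ii\cB{S>s_\al}}
= \Ca\INTth{\Fbar_{S^*\mid\th}(s_\al)\,\PIta}
= \Ca\,\Fbar_{S^*}(s_\al)
= \Ca(1-\ALa) .
\end{align*}
Hence $\FRa\,\AIc\,\E\sB{\Th\,\ii\cB{S>s_\al}}=\AIc\FRc$, and summing the three contributions gives the asserted formula. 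I expect the main obstacle to be bookkeeping rather than depth: the genuinely delicate points are the measurability that licenses the tower step and the recognition of the Esscher-type reweighting $\th\PIt=\Ca\PIta$ of the mixing density, after which the remainder is the routine bivariate-normal regression computation; this is also the $k=1$ instance of the mechanism underlying Theorem \ref{thm:1}.
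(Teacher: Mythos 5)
Your argument is correct, but note that the paper itself does not prove this lemma: it is imported verbatim from Theorem 4.1 of \cite{KK19}, so there is no in-paper proof to match against. Your derivation is a valid, self-contained one, and it is essentially the $k=1$ instance of the machinery the paper deploys elsewhere: the regression identity $\aCEEst{X_i}=\AIa+\AIb s+\AIc\th$ is exactly the paper's Lemma \ref{lmm:4}, and the reweighting $\th\PIt=\Ca\PIta$ with $\Fbar_{S\mid\th}(s_\al)=\Fbar_{S^*\mid\th}(s_\al)$, hence $\E\sB{\Th\,\ii\cB{S>s_\al}}=\Ca(1-\ALa)$, is the same identity used in Lemma \ref{lmm:2} and Appendix \ref{app:B}. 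Where the paper touches this formula (proof of Proposition \ref{prop:2}), it instead marginalises $\Th$ first, invoking equation (45) of \cite{KK19}, $\aCEEs{X_i}=\AIa+\AIb s+\AIc\Ca f_{S^*}(s)/f_S(s)$, and then integrates $s$ over the tail; you condition on $(S,\Th)$ jointly and integrate $\th$ out at the end. The two routes are the same computation organised in a different order, and yours has the advantage of not needing the density-ratio representation as an external input. Two small points you should make explicit: replacing $\P(S>s_\al)$ by $1-\al$ (and likewise absorbing it into $\AIa$ and $\AIb\bCES{S}{S}$) requires $F_S(s_\al)=\al$, i.e.\ continuity of $F_S$ at $s_\al$, which holds for NMVM with $\sg^2>0$ and is implicitly assumed throughout the paper; and the finiteness of $\E\sB{\Th}=\Ca$ (and of $\E|X_i|$) is needed for the tower step, which is covered by assumption (iii) of the model setup. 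Also, the $\AIa=\MUi-\AIb\SUMj\MUi$ in the statement is a typo for $\MUi-\AIb\SUMj\MUj$; your use of $\mu=\SUMj\MUj$ is the intended reading, consistent with Lemma \ref{lmm:4}.
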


\begin{lemma} \label{lmm:4}
Consider the same random variables $X_1, \dots, X_n$, and $S$ in Lemma \ref{lmm:3}, as well as all related parameters and coefficients. The random vector $(X_1, \dots, X_n \mid S=s, \Th=\th)$ for some $s \in \R$, $\th \in \R_+$ follows a multivariate normal distribution, with
\begin{align*}
\aCEEst{X_i} = \AIa + \AIb s + \AIc \th \mbox{~~for all~~} i=1, \dots, n, 
\end{align*}
and
\begin{align*}
\aCOVst{X_i, X_j} = \th(\SGij - \AIb \AJb \SGS) \mbox{~~for all~~} i,j=1, \dots, n.
\end{align*}
\end{lemma}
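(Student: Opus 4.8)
The plan is to exploit the conditional structure of the NMVM model. Recall from Definition \ref{def:4} that $\mathbf{X} \mid \Th = \th \sim MVN_n(\mathbf{m}(\th), \th\bm{\Sigma})$, with $\mathbf{m}(\th) = \bm{\mu} + \th\bm{\gm}$. Since $S = \mathbf{1}'\mathbf{X}$, the pair $(X_i, S)$ is, conditionally on $\Th = \th$, jointly Gaussian; more generally $(X_1,\dots,X_n,S) \mid \Th = \th$ is multivariate normal. Conditioning a multivariate normal vector further on an affine functional of it again yields a multivariate normal, so $(X_1,\dots,X_n \mid S = s, \Th = \th)$ is multivariate normal. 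This establishes the distributional claim immediately.

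For the conditional mean, I would apply the standard Gaussian conditioning formula to $(X_i, S) \mid \Th = \th$. Conditionally on $\Th = \th$ we have $\E[X_i \mid \th] = \MUi + \th\GMi$, $\E[S \mid \th] = \mathbf{1}'\bm{\mu} + \th\,\mathbf{1}'\bm{\gm}$, $\Var(S \mid \th) = \th\,\mathbf{1}'\bm{\Sigma}\mathbf{1} = \th\SGS$, and $\Cov(X_i, S \mid \th) = \th\SUMj\SGij = \th\,\SGiS$. Hence
\begin{align*}
\aCEEst{X_i}
= \MUi + \th\GMi + \frac{\th\SGiS}{\th\SGS}\bigl(s - \mathbf{1}'\bm{\mu} - \th\,\mathbf{1}'\bm{\gm}\bigr)
= \MUi + \th\GMi + \AIb\bigl(s - \SUMj\MUj - \th\SUMj\GMj\bigr),
\end{align*}
where $\AIb = \SGiS/\SGS$ as in Lemma \ref{lmm:3}. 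Regrouping the constant, the coefficient of $s$, and the coefficient of $\th$ gives exactly $\AIa + \AIb s + \AIc\th$ with $\AIa = \MUi - \AIb\SUMj\MUj$ and $\AIc = \GMi - \AIb\SUMj\GMj$, matching the definitions in Lemma \ref{lmm:3}. Note the $\th$'s cancel in the regression coefficient $\Cov(X_i,S\mid\th)/\Var(S\mid\th)$, which is why the structure mirrors the unconditional formula of Lemma \ref{lmm:3}.

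For the conditional covariance, I would use the Gaussian formula $\Cov(X_i, X_j \mid S = s, \Th = \th) = \Cov(X_i, X_j \mid \th) - \Cov(X_i, S \mid \th)\Var(S \mid \th)^{-1}\Cov(S, X_j \mid \th)$, which does not depend on $s$. Substituting $\Cov(X_i, X_j \mid \th) = \th\SGij$, $\Cov(X_i, S \mid \th) = \th\SGiS = \th\AIb\SGS$, and $\Var(S\mid\th) = \th\SGS$, this equals $\th\SGij - \th\AIb\SGS \cdot (\th\SGS)^{-1} \cdot \th\AJb\SGS = \th(\SGij - \AIb\AJb\SGS)$, as claimed. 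None of this is genuinely difficult; the only point requiring a little care is bookkeeping the grouping of terms so that the coefficients line up with the $\AIa, \AIb, \AIc$ of Lemma \ref{lmm:3} (equivalently, recognising $\SGiS = \AIb\SGS$), and confirming that the $\th$-factors in the regression coefficients cancel so that $\AIb$ is $\th$-free. I expect that reconciliation with Lemma \ref{lmm:3}'s notation — rather than any probabilistic content — to be the main thing to get right.
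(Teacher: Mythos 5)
Your proposal is correct and follows essentially the same route as the paper's proof: both establish joint normality of $(X_1,\dots,X_n,S)$ conditional on $\Th=\th$ (the paper cites Theorems 3.3.3 and 3.3.4 of Tong, where you invoke the standard Gaussian conditioning formulas) and then substitute $\Cov(X_i,S\mid\th)=\th\SGiS$, $\Var(S\mid\th)=\th\SGS$, cancelling the $\th$-factors in the regression coefficient and regrouping to match $\AIa,\AIb,\AIc$ from Lemma \ref{lmm:3}. No gaps; the bookkeeping matches the paper's computation exactly.
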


\begin{proof} 
Since the random vector $(X_1, \dots, X_n \mid \Th=\th)$ follows a multivariate normal distribution (see Definition \ref{def:4}), Theorem 3.3.3 of \cite{T12} implies that $(X_1, \dots, X_n, S \mid \Th=\th)$ also follows a multivariate normal distribution.  By Theorem 3.3.4 of \cite{T12}, $(X_1, \dots, X_n \mid S=s, \Th=\th)$ follows a multivariate normal distribution with its mean and covariance given by
\begin{align*}
\aCEEst{X_i} 
=&~ \E[X_i \mid \Th=\th] + \frac{\Cov[X_i, S \mid \Th=\th]}{\Cov[S, S \mid \Th=\th]} (s-\E[S \mid \Th=\th]) \\ 
=&~ (\mu_i + \th \gm_i) - \frac{\SUMj \SGij}{\SGS} \SUMk (\mu_k + \th \gm_k) + \frac{\SUMj \SGij}{\SGS} s \\ 
=&~ \MUi - \AIb \SUMk \MUk + \th \rB{\GMi - \AIb \SUMk \GMk } + \AIb s \\
=&~ \AIa + \AIb s + \AIc \th ,
\end{align*}
and
\begin{align*}
\aCOVst{X_i, X_j} 
=&~ \Cov[X_i, X_j \mid \Th=\th]
  - \frac{\Cov[X_i, S \mid \Th=\th] ~ \Cov[S, X_j \mid \Th=\th]}{\Cov[S, S \mid \Th=\th]} \\
=&~ \th \SGij - \frac{(\th \SGiS) (\th \SGjS)}{\th \SGS} \\
=&~ \th (\SGij - \AIb \AJb \SGS) . \qedhere
\end{align*} 
\end{proof}

Before arriving at the TCM-based capital allocation, we provide a useful intermediate result. 
\begin{proposition} \label{prop:2}
Consider the same random variables $X_1, \dots, X_n$, and $S$ in Lemma \ref{lmm:3}, as well as all related parameters and coefficients. Fix $k \in \N \setminus \cB{1}$ and $\al \in (0,1)$. For all $i \in \cB{1,\dots,n}$, we have 
\begin{align*}
\aTCOVS{X_i, S^{k-1}} 
=&~ \AIb \rB{ \bCES{S^k}{S} - \bCES{S}{S} \bCES{S^{k-1}}{S} }
\\ &~~
  + \AIc \FRc \rB{ \bCES{(S^*)^{k-1}}{S^*} - \bCES{S^{k-1}}{S} } .
\end{align*}
\end{proposition}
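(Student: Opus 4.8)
The plan is to expand the conditional covariance and reduce each piece to tail moments of $S$ and of the tilted variable $S^*$. First I would write
\begin{align*}
\aTCOVS{X_i, S^{k-1}} = \bCES{X_i S^{k-1}}{S} - \bCES{X_i}{S}\,\bCES{S^{k-1}}{S},
\end{align*}
and substitute the CTE-based allocation formula of Lemma \ref{lmm:3}, namely $\bCES{X_i}{S} = \AIa + \AIb\bCES{S}{S} + \AIc\FRc$, into the second term.

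For the cross term I would condition on the pair $(S,\Th)$. Since the event $\cB{S>s_\al}$ depends only on $S$, while $S^{k-1}$ is measurable with respect to $(S,\Th)$, the tower property gives
\begin{align*}
\bCES{X_i S^{k-1}}{S} = \bCES{S^{k-1}\,\E\sB{X_i \mid S,\Th}}{S},
\end{align*}
and Lemma \ref{lmm:4} identifies $\aCEEst{X_i} = \AIa + \AIb s + \AIc\th$, i.e.\ $\E\sB{X_i\mid S,\Th} = \AIa + \AIb S + \AIc\Th$. Expanding the product yields
\begin{align*}
\bCES{X_i S^{k-1}}{S} = \AIa\bCES{S^{k-1}}{S} + \AIb\bCES{S^k}{S} + \AIc\,\bCES{\Th S^{k-1}}{S}.
\end{align*}

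The only genuinely delicate term is $\bCES{\Th S^{k-1}}{S}$, which I would handle exactly as in the proof of Lemma \ref{lmm:2}: write it as $\FRa\INTth{\th\,\PIt\,\Fbar_{S|\th}(s_\al)\,\bCESt{S^{k-1}}{S}}$, use the identity $\th\,\PIt = \Ca\,\PIta$ to pass to the tilted density, and invoke the fact that $S\mid\Th=\th\laweq S^*\mid\Th^*=\th$ (both are $N(\mu+\th\gm,\th\sg^2)$), so that the integrand becomes $\Ca\,\PIta\,\Fbar_{S^*|\th}(s_\al)\,\bCESt{(S^*)^{k-1}}{S^*}$. Lemma \ref{lmm:2} with $l=1$ then collapses the integral to $\Ca\FRa(1-\ALa)\bCES{(S^*)^{k-1}}{S^*} = \FRc\,\bCES{(S^*)^{k-1}}{S^*}$. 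Substituting this together with the Lemma \ref{lmm:3} expression back into the covariance, the two $\AIa\bCES{S^{k-1}}{S}$ contributions cancel, and regrouping the remaining terms by the coefficients $\AIb$ and $\AIc\FRc$ produces the claimed identity. Everything apart from the measure-change step for the $\Th S^{k-1}$ term is routine algebra, so there I would simply cite the mechanism of Lemma \ref{lmm:2} rather than repeat it.
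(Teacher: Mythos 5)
Your proposal is correct, and its skeleton coincides with the paper's proof: the same decomposition $\aTCOVS{X_i,S^{k-1}}=\bCES{X_i S^{k-1}}{S}-\bCES{X_i}{S}\bCES{S^{k-1}}{S}$, the same substitution of Lemma \ref{lmm:3} for $\bCES{X_i}{S}$, the same intermediate identity $\bCES{X_i S^{k-1}}{S}=\AIa\bCES{S^{k-1}}{S}+\AIb\bCES{S^k}{S}+\AIc\FRc\bCES{(S^*)^{k-1}}{S^*}$, and the same cancellation of the $\AIa$ terms at the end. Where you diverge is in how that cross term is obtained: the paper quotes equation (45) of \cite{KK19}, i.e.\ the closed form $\aCEEs{X_i}=\AIa+\AIb s+\AIc\,\Ca\, f_{S^*}(s)/f_S(s)$, and integrates in $s$ over $(s_\al,\infty)$, so the tilting towards $S^*$ is already packaged inside the density ratio; you instead condition on $(S,\Th)$ via the tower property and Lemma \ref{lmm:4}, and evaluate $\bCES{\Th S^{k-1}}{S}$ directly through the change of mixing density $\th\,\PIt=\Ca\,\PIta$, the equality of the conditional laws of $S$ and $S^*$ given $\Th=\th$, and Lemma \ref{lmm:2} with $l=1$. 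The two routes are equivalent (your measure-change step is exactly the mechanism underlying the quoted formula for $\aCEEs{X_i}$), but yours is self-contained within the paper's own lemmas at the cost of a slightly longer computation, while the paper's is shorter by outsourcing that step to an external result. There is no gap in your argument.
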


\begin{proof}
Using similar techniques to those in the proof of  Lemma \ref{lmm:2}, we obtain
\begin{align*}
\aTCOVS{X_i, S^{k-1}} 
=&~ \bCES{X_i S^{k-1}}{S} - \bCES{X_i}{S} \bCES{S^{k-1}}{S} \\
=&~ \FRa \aINTSa{ s^{k-1} \aCEEs{X_i} \FS } \\
&~~ - \bCES{X_i}{S} \bCES{S^{k-1}}{S}.
\end{align*}
Explicit solutions to $\bCES{X_i}{S}$ and $\bCES{S^{k-1}}{S}$ are available in Lemma \ref{lmm:3} and Theorem \ref{thm:1}, respectively. Equation (45) in \cite{KK19} states that
\begin{align*}
\aCEEs{X_i} = \AIa + \AIb + \AIc \Ca \frac{f_{S^*}(s)}{\FS} .
\end{align*}
Thus, we have
\begin{align*}
\bCES{X_i S^{k-1}}{S} 
=&~ \FRa \aINTSa{ 
    s^{k-1} \rB{\AIa + \AIb s + \AIc \Ca \frac{f_{S^*}(s)}{\FS}} \FS } \\
=&~ \FRa \aINTSa{\AIa s^{k-1} \FS + \AIb s^k \FS + \AIc \Ca s^{k-1} f_{S^*}(s) } \\
=&~ \AIa \bCES{S^{k-1}}{S}  + \AIb \bCES{S^k}{S} 
\\ &~~ + \AIc \FRc \bCES{(S^*)^{k-1}}{S^*} ,
\end{align*}
which gives
\begin{align*}
\aTCOVS{X_i, S^{k-1}} 
=&~ \bCES{X_i S^{k-1}}{S} - \bCES{X_i}{S} \bCES{S^{k-1}}{S} \\
=&~ \AIa \bCES{S^{k-1}}{S}  + \AIb \bCES{S^k}{S} 
\\ &~~ 
  + \AIc \FRc \bCES{(S^*)^{k-1}}{S^*} 
\\ &~~
  - \rB{\AIa + \AIb \bCES{S}{S} + \AIc \FRc} \bCES{S^{k-1}}{S} \\
=&~ \AIb \rB{\bCES{S^k}{S} - \bCES{S}{S} \bCES{S^{k-1}}{S}} 
\\ &~~
  + \AIc \FRc \rB{ \bCES{(S^*)^{k-1}}{S^*} -  \bCES{S^{k-1}}{S} } . \qedhere
\end{align*}
\end{proof}

Now we state our main result in capital allocation.

\begin{theorem} \label{thm:2} 
Consider the same random variables $X_1, \dots, X_n$, and $S$ in Lemma \ref{lmm:3}, as well as all related parameters and coefficients. For some $k \in \N \setminus \cB{1}$, under the $k$-th order TCM-based capital allocation in Definition \ref{def:3} with confidence level $\al \in (0,1)$, the capital allocated to $X_i$ for all $i = 1, \dots, n$, is given by
\begin{align} \label{eq:10}
K_i
=&~ \aTCOVS{X_i, (S-\aCTEa{S})^{k-1}} \nonumber \\ 
=&~ \AIb \bTCMa{k}{S}  
  + \AIc \FRc \rB{ \bCES{ \rB{S^*-\aCTEa{S}}^{k-1} }{S^*} - \bTCMa{k-1}{S} } .
\end{align}
\end{theorem}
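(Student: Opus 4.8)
The plan is to reduce the claim to Proposition \ref{prop:2} by a binomial expansion of the ``centring'' factor $(S-\aCTEa{S})^{k-1}$ inside the tail covariance, exactly mirroring how \eqref{eq:6} in Theorem \ref{thm:1} is obtained. Write $c := \aCTEa{S}$ for brevity. First I would expand
\begin{align*}
\aTCOVS{X_i, (S-c)^{k-1}}
= \sum_{j=0}^{k-1} \binom{k-1}{j} (-c)^j \, \aTCOVS{X_i, S^{k-1-j}},
\end{align*}
which is legitimate because $\Cov[\cdot\,,\cdot \mid S>s_\al]$ is bilinear and $c$ is a constant; note the $j=k-1$ term is $\aTCOVS{X_i,1}=0$, so the sum effectively runs to $j=k-2$, but keeping all terms is harmless.

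Next I would substitute the expression from Proposition \ref{prop:2} for each $\aTCOVS{X_i, S^{m}}$ with $m=k-1-j$ (for $m\ge 1$; the $m=0$ term vanishes). This produces two groups of terms, one carrying the coefficient $\AIb$ and one carrying $\AIc\FRc$. The $\AIb$-group is
\begin{align*}
\AIb \sum_{j=0}^{k-1} \binom{k-1}{j}(-c)^j\rB{\bCES{S^{k-j}}{S} - \bCES{S}{S}\,\bCES{S^{k-1-j}}{S}},
\end{align*}
and I would argue that the bracket, after re-indexing, is precisely the binomial expansion of $\E[(S-c)^{k-1}(S-\bCES{S}{S}) \mid S>s_\al]$; but since $\bCES{S}{S} = c = \aCTEa{S}$ by definition of the CTE-based allocation, this collapses to $\E[(S-c)^k \mid S>s_\al] = \bTCMa{k}{S}$. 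Concretely, $\sum_j \binom{k-1}{j}(-c)^j \bCES{S^{k-j}}{S} = \bCES{S(S-c)^{k-1}}{S}$ and $\sum_j \binom{k-1}{j}(-c)^j \bCES{S^{k-1-j}}{S} = \bCES{(S-c)^{k-1}}{S} = \bTCMa{k-1}{S}$, so the bracket becomes $\bCES{(S-c)^{k-1}(S-c)}{S} = \bTCMa{k}{S}$. That accounts for the first term of \eqref{eq:10}.

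For the $\AIc\FRc$-group, the same two sums appear with $S^*$ in place of the first factor:
\begin{align*}
\AIc\FRc \sum_{j=0}^{k-1}\binom{k-1}{j}(-c)^j\rB{\bCES{(S^*)^{k-1-j}}{S^*} - \bCES{S^{k-1-j}}{S}},
\end{align*}
which telescopes to $\AIc\FRc\rB{\bCES{(S^*-c)^{k-1}}{S^*} - \bCES{(S-c)^{k-1}}{S}} = \AIc\FRc\rB{\bCES{(S^*-\aCTEa{S})^{k-1}}{S^*} - \bTCMa{k-1}{S}}$, which is exactly the second term of \eqref{eq:10}. Summing the two groups gives the result. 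The one point requiring care — and the main (if modest) obstacle — is the bookkeeping in matching the shifted binomial sums: one must verify that the index ranges line up so that $\sum_{j}\binom{k-1}{j}(-c)^j\bCES{S^{k-j}}{S}$ really equals $\bCES{S\,(S-c)^{k-1}}{S}$ (an expansion in $(S-c)$ of a product that is \emph{not} itself a pure power of $(S-c)$), and that the $m=0$ contributions are correctly discarded; once that is pinned down the rest is routine algebra. I would also remark that setting $k=2$ recovers the known TV-based allocation formula, giving a useful sanity check.
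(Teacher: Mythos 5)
Your proposal is correct and follows essentially the same route as the paper: binomial expansion of the tail covariance, substitution of Proposition \ref{prop:2} (with the vanishing $j=k-1$ term handled harmlessly), and separate treatment of the $\AIb$- and $\AIc$-groups. The only difference is cosmetic: where the paper reaches $\bTCMa{k}{S}$ via an index shift and Pascal's identity, you identify the two sums directly as $\bCES{S(S-\aCTEa{S})^{k-1}}{S}$ and $\aCTEa{S}\,\bCES{(S-\aCTEa{S})^{k-1}}{S}$ and combine them using $\bCES{S}{S}=\aCTEa{S}$, which is an equivalent and slightly tidier piece of bookkeeping.
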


\begin{proof}
Using the binomial expansion and Proposition \ref{prop:2}, we have
\begin{align} \label{eq:11}
K_i
=&~ \sum^{k-1}_{j=0} \binom{k-1}{j} (-\aCTEa{S})^j \aTCOVS{X_i, S^{k-1-j}} \nonumber \\
=&~ \sum^{k-2}_{j=0} \binom{k-1}{j} (-\aCTEa{S})^j \Bigg(
    \AIb \rB{\bCES{S^{k-j}}{S} - \bCES{S}{S} \bCES{S^{k-1-j}}{S}} 
\nonumber \\ &~~~~~~~~~~~~~~~~~~~~~~~~~~~~~~~~~~~~~~~
    + \AIc \FRc \Big( \bCES{(S^*)^{k-1-j}}{S^*} 
\nonumber \\ &~~~~~~~~~~~~~~~~~~~~~~~~~~~~~~~~~~~~~~~~~~~~~~~~~~~~~~~~~~~~~
      - \bCES{S^{k-1-j}}{S} \Big)
  \Bigg) + 0 \nonumber \\
=&~ \AIb \sum^{k-2}_{j=0} \binom{k-1}{j} 
    \rB{ (-\aCTEa{S})^j \bCES{S^{k-j}}{S} + (-\aCTEa{S})^{j+1} \bCES{S^{k-1-j}}{S} } 
\nonumber \\ &~~
  + \AIc \FRc \sum^{k-2}_{j=0} \binom{k-1}{j} \Bigg( 
    (-\aCTEa{S})^j \bCES{(S^*)^{k-1-j}}{S^*} 
\nonumber \\ &~~~~~~~~~~~~~~~~~~~~~~~~~~~~~~~~~~~~~~~~~~
    - (-\aCTEa{S})^j \bCES{S^{k-1-j}}{S} \Bigg) .
\end{align}
For the latter summation (with coefficient $\AIc \Ca (1-\ALa)/\rB{1-\al}$), we have 
\begin{align} \label{eq:12}
&~ \sum^{k-2}_{j=0} \binom{k-1}{j} (-\aCTEa{S})^j 
    \rB{ \bCES{(S^*)^{k-1-j}}{S^*} - \bCES{S^{k-1-j}}{S} } \nonumber \\
=&~ \rB{ \sum^{k-1}_{j=0} \binom{k-1}{j} 
    (-\aCTEa{S})^j \bCES{(S^*)^{k-1-j}}{S^*}
  - (-\aCTEa{S})^{k-1} }
\nonumber \\ &~~
  - \rB{ \sum^{k-1}_{j=0} \binom{k-1}{j} 
    (-\aCTEa{S})^j \bCES{S^{k-1-j}}{S} 
  - (-\aCTEa{S})^{k-1} } \nonumber \\
=&~ \bCES{ \rB{S^*-\aCTEa{S}}^{k-1} }{S^*} - \aTCMS{k-1} \nonumber \\
=&~ \bCES{ \rB{S^*-\aCTEa{S}}^{k-1} }{S^*} - \bTCMa{k-1}{S} , 
\end{align}
with the second-to-last equality being an application of the binomial theorem. For the former summation (with coefficient $\AIb$), we first notice that
\begin{align*}
&~ (-\aCTEa{S})^{k-1} \bCES{S^{k-(k-1)}}{S} + (-\aCTEa{S})^{k-1+1} \bCES{S^{k-1-(k-1)}}{S} \\
=&~ (-\aCTEa{S})^{k-1} \aCTEa{S} + (-\aCTEa{S})^k 
= 0.
\end{align*}
Using the identity above, we have
\begin{align} \label{eq:13}
&~ \sum^{k-2}_{j=0} \binom{k-1}{j} (-\aCTEa{S})^j \bCES{S^{k-j}}{S} 
  + \sum^{k-2}_{j=0} \binom{k-1}{j} (-\aCTEa{S})^{j+1} \bCES{S^{k-1-j}}{S} \nonumber \\
=&~ \sum^{k-1}_{j=0} \binom{k-1}{j} (-\aCTEa{S})^j \bCES{S^{k-j}}{S} 
  + \sum^{k-1}_{j=0} \binom{k-1}{j} (-\aCTEa{S})^{j+1} \bCES{S^{k-1-j}}{S} \nonumber \\
=&~ \bCES{S^k}{S} 
  + \sum^{k-1}_{j=1} \binom{k-1}{j} (-\aCTEa{S})^j \bCES{S^{k-j}}{S} 
\nonumber \\ &~~
  + \sum^{k}_{j=1} \binom{k-1}{j-1} (-\aCTEa{S})^j \bCES{S^{k-j}}{S} \nonumber \\
=&~ \bCES{S^k}{S} 
  + \sum^{k-1}_{j=1} \binom{k-1}{j} (-\aCTEa{S})^j \bCES{S^{k-j}}{S}
\nonumber \\ &~~
  + \sum^{k-1}_{j=1} \binom{k-1}{j-1} (-\aCTEa{S})^j \bCES{S^{k-j}}{S}
  + (-\aCTEa{S})^k \nonumber \\
=&~ \bCES{S^k}{S} 
  + \sum^{k-1}_{j=1} \binom{k}{j} (-\aCTEa{S})^j \bCES{S^{k-j}}{S} 
  +  (-\aCTEa{S})^k \nonumber \\
=&~ \sum^k_{j=0} \binom{k}{j} (-\aCTEa{S})^j \bCES{S^{k-j}}{S} 
= \bTCMa{k}{S} ,
\end{align}
where the binomial theorem is used at the last equality, and the identity $\binom{k-1}{j} + \binom{k-1}{j-1} = \binom{k}{j}$ is used at the fourth equality. Finally, \eqref{eq:10} is obtained by substituting \eqref{eq:12} and \eqref{eq:13} into \eqref{eq:11}.
\end{proof}

The capital allocation expressions in Theorem \ref{thm:2} can be seen as the sum of two components, signified by the terms with coefficients $\AIb$ and $\AIc$ in \eqref{eq:10}, which are the only variables that are specific to each loss $X_i$. Based on the representations of $\AIb$ and $\AIc$ in Lemma \ref{lmm:3}, the variable $\AIb$ represents a direct risk contribution from $X_i$ to the total risk $\bTCMa{k}{S}$, whereas $\AIc$ shows the indirect adjustments required to reflect other tail behaviours such as tail skewness. The existence of these interpretations allows agents to explain their capital allocation outcome to other stakeholders more easily, while maintaining the rigorous mathematical results that support their complex risk management priorities.

As results of the second order will naturally be of more interest for their intuitive interpretation, we provide explicit results of the $2$-nd order TCM-based order capital allocation, which is also known as the TV-based capital allocation (see Definition \ref{def:3}). 

\begin{corollary} \label{cor:2}
Consider the same random variables $X_1, \dots, X_n$, and $S$ in Lemma \ref{lmm:3}, as well as all related parameters and coefficients. Under the TV-based capital allocation with confidence level $\al \in (0,1)$, the capital allocated to $X_i$ for all $i = 1, \dots, n$, is given by
\begin{align*}
K_i
= \aTCOVS{X_i, S} 
= \AIb \mathrm{TV}_\al(S) + \FRc \AIc \rB{\bCES{S^*}{S^*} - \aCTEa{S}} . 
\end{align*}
\end{corollary}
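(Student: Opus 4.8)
The plan is to specialise Theorem~\ref{thm:2} to $k=2$ and simplify. Taking $k=2$ in \eqref{eq:10}, the allocated capital becomes
\begin{align*}
K_i = \AIb \, \mathrm{TV}_\al(S) + \AIc \FRc \rB{ \bCES{(S^* - \aCTEa{S})^{1}}{S^*} - \bTCMa{1}{S} },
\end{align*}
using that the $2$nd order TCM of $S$ is exactly $\mathrm{TV}_\al(S)$ by Definition~\ref{def:2}. The remaining work is to evaluate the bracketed term. First I would note that $\bTCMa{1}{S} = \E[S - \aCTEa{S} \mid S > s_\al] = \aCTEa{S} - \aCTEa{S} = 0$, since $\aCTEa{S} = \bCES{S}{S}$ by Definition~\ref{def:1} and the CTE is a conditional expectation over the same tail event. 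Next, linearity of the conditional expectation gives $\bCES{(S^* - \aCTEa{S})^{1}}{S^*} = \bCES{S^*}{S^*} - \aCTEa{S}$, since $\aCTEa{S}$ is a constant. Substituting these two simplifications yields precisely the claimed expression
\begin{align*}
K_i = \AIb \, \mathrm{TV}_\al(S) + \FRc \AIc \rB{ \bCES{S^*}{S^*} - \aCTEa{S} }.
\end{align*}

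There is essentially no obstacle here: the corollary is a direct substitution of $k=2$ into the already-proven Theorem~\ref{thm:2}, combined with the two elementary observations that $\bTCMa{1}{S}=0$ and that a degree-one centred tail moment splits linearly. The only thing worth stating carefully is why $\bTCMa{1}{S}=0$, i.e. that the centring constant in Definition~\ref{def:2} is the CTE itself, so the first-order centred tail moment vanishes identically; I would include that as a one-line remark inside the proof. If one prefers not to invoke Theorem~\ref{thm:2}, an equally short alternative is to apply Proposition~\ref{prop:2} with $k=2$ directly to get $\aTCOVS{X_i, S} = \AIb(\bTMa{2}{S} - \aCTEa{S}^2) + \AIc\FRc(\bCES{S^*}{S^*} - \aCTEa{S})$ and then recognise $\bTMa{2}{S} - \aCTEa{S}^2 = \mathrm{TV}_\al(S)$ via $\mathrm{TV}_\al(S) = \bCES{S^2}{S} - \bCES{S}{S}^2$, exactly as in the proof of Corollary~\ref{cor:1}. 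Either route is a few lines; I would present the first one for brevity.
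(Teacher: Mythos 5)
Your proposal is correct and matches the paper's own proof, which likewise substitutes $k=2$ into Theorem \ref{thm:2} and uses $\bTCMa{1}{S}=0$ (the linear splitting of the first-order centred tail moment of $S^*$ is the same trivial step). The alternative route via Proposition \ref{prop:2} is also valid but unnecessary.
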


\begin{proof}
Simply substituting $k=2$ into Theorem \ref{thm:2} and noting that $\bTCMa{1}{S}=0$, we obtain the desired result.
\end{proof}
\begin{remark} \label{rem:5}
In recent literature, \cite{IL25} and \cite{Y25} studied $\Var[X_i \mid S>s_\al]$ and $\aTCOVS{X_i, X_j}$ respectively due to their relevance to the tail behaviour of $X_i$.  We provide the expressions for two relevant identities for the NMVM model, which are directly obtainable from Lemma \ref{lmm:4}, given by 
\begin{align*}
\bCES{X_i^2}{S} 
=&~ \AIb^2 \bCES{S^2}{S}
  + 2 \AIa \AIb \bCES{S}{S} \\
  &~~ + 2 \AIc \AIb \FRc \bCES{S^*}{S^*} + \AIa^2 \\
  &~~ + (2 \AIa \AIc + \SGi - \AIb^2 \SGS) \FRc 
  + \AIc^2 \FRd ,
\end{align*}
and 
\begin{align*}
\bCES{X_i X_j}{S} 
=&~ \AIb \AJb \bCES{S^2}{S}
  + \rB{\AIb \AJa + \AIa \AJb} \bCES{S}{S} \\ 
  &~~ + \rB{\AIb \AJc + \AIc \AJb} \FRc \bCES{S^*}{S^*} + \AIa \AJa \\ 
  &~~ + \rB{\AIc \AJa + \AIa \AJc + \SGij - \AIb \AJb \SGS} \FRc
  + \AIc \AJc \FRd .
\end{align*}
See Appendix \ref{app:B} for the derivation of these identities.
\end{remark}

\section{Numerical illustration} \label{sec:5}

This section applies the TCM-based capital allocation results obtained in previous sections to financial losses modelled by the multivariate generalised hyperbolic (GH) distribution. A capital allocation based on both the CTE and TCMs is used to decide an appropriate capital reserve allocation.

For this illustration, we selected the historical daily log losses of four stocks, namely Boeing (BA), American Express (AXP), ExxonMobil (XOM), and Chevron (CVX), denoted by $X_1,\dots,X_4$, from 1 January 2020 to 31 December 2024 (1257 trading days). The daily log loss of a stock at day $t$ is calculated as $L_t = - \ln \rB{P_t/P_{t-1}}$, where $P_t$ is the adjusted closing price of the stock at trading day $t$. Historical stock data are obtained from Yahoo Finance via the R package \texttt{quantmod}.

The summary statistics of the data are shown in Table \ref{tbl:1}. We observe that all stocks exhibit non-zero skewness and that the kurtosis is much greater than 3 (the kurtosis of the normal distribution). This indicates the existence of heavy tails in the data, which can be captured by the multivariate GH distribution. 

\begin{table}[H] 
\centering
\begin{tabular}{|l|c|c|c|c|c|c|c|}  
\hline
\textbf{Index} & \textbf{Mean} & \textbf{Median} & \textbf{Minimum} & \textbf{Maximum} & \textbf{St.Dev.} & \textbf{Skewness} & \textbf{Kurtosis} \\ \hline
\textbf{BA}     & 0.000501 & 0.000422 & -0.217677 & 0.272444 & 0.032270 & 0.421802 & 15.44124 \\ \hline
\textbf{AXP}    & -0.000737 & -0.000785 & -0.197886 & 0.160388 & 0.024025 & -0.599463 & 16.69053 \\ \hline
\textbf{XOM}    & -0.000511 & -0.000212 & -0.119442 & 0.130391 & 0.021658 & 0.161940 & 7.63877 \\ \hline
\textbf{CVX}    & -0.000308 & -0.000787 & -0.204904 & 0.250062 & 0.022591 & 1.072524 & 27.08356 \\ \hline
\end{tabular}
\caption{Descriptive statistics of the stocks’ daily log losses} \label{tbl:1}
\end{table}

To fit the multivariate GH distribution, we used the EM algorithm calibration in \cite{M15} implemented via the \texttt{fit.ghypmv} function in the R package \texttt{ghyp}. As our goal in this section is to illustrate the impact of incorporating the TCMs into the CTE-based capital allocation, we are not concerned with finding the best-fit model in the NMVM or GH families. For such empirical tasks, we refer to \cite{IL15} and \cite{IL19}. The fitted model is $\mathbf{X} \sim MGH_4(-1.689,\ 4.509 \times 10^{-5},\ 1.380,\ \bm{\mu},\ \bm{\Sigma},\ \bm{\gm})$, where
\begin{align*}
\bm{\mu}' 
=&~ \rB{2.393,\ -15.135,\ -0.474,\ -0.305} \times 10^{-4} ; \\
\bm{\gamma}' 
=&~ \rB{2.556,\ 7.584,\ -4.530,\ -0.0287} \times 10^{-4} ; \\
\bm{\Sigma} =&~ 
  \begin{pmatrix}
    9.462 & 3.790 & 2.710 & 2.538 \\
    3.790 & 5.278 & 2.533 & 2.417 \\
    2.710 & 2.533 & 5.495 & 4.338 \\
    2.538 & 2.417 & 4.338 & 4.413
  \end{pmatrix} \times 10^{-4}.
\end{align*}
The fitted density function of each marginal $X_i$ is shown below.
\begin{figure} [H]
    \centering
    \includegraphics[width=1\linewidth]{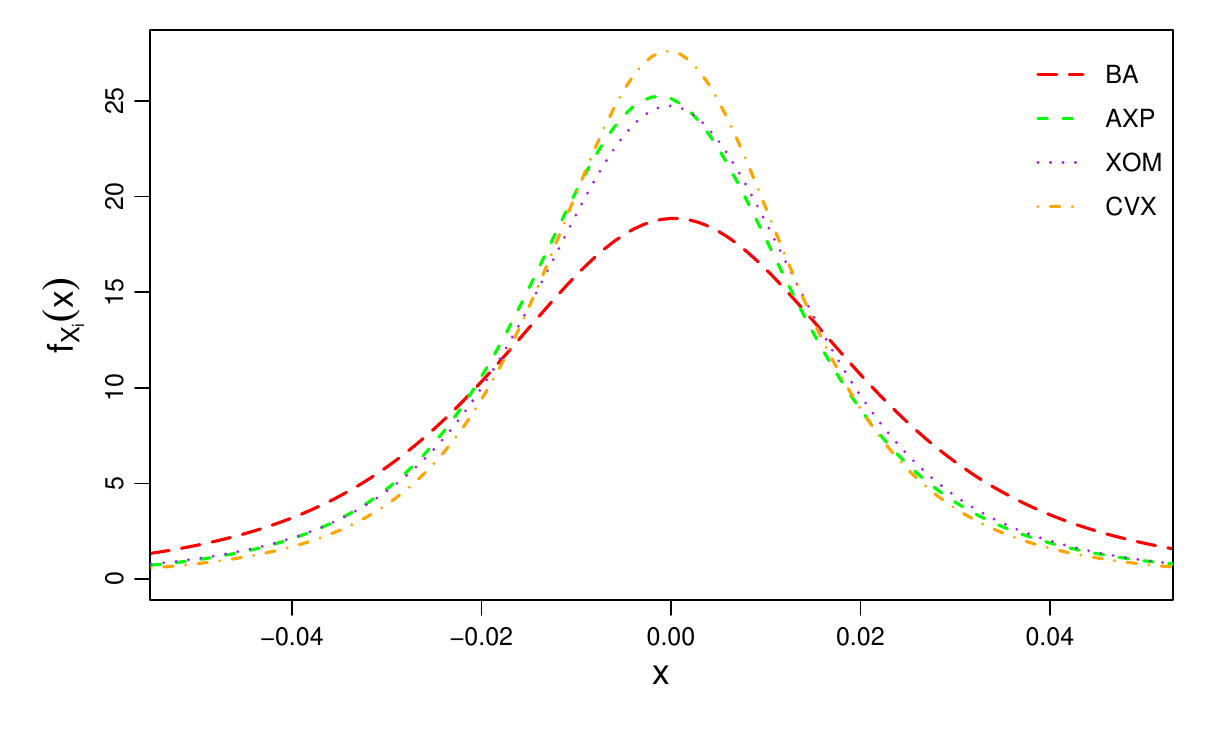}
    \caption{Marginal densities, $f_{X_i}(x)$, of $X_1, X_2, X_3, X_4$}
    \label{fig:1}
\end{figure}

From Figure \ref{fig:1}, it is seen that the log losses are slightly asymmetric in general. The stock losses are positively correlated as seen from the parameter $\bm{\Sigma}$, which is reasonable since companies such as XOM and CVX are from the same industry, and therefore the diversification effect is not as strong as expected for this portfolio. Among the individual stocks, BA has a positive mean log loss and a visibly heavier tail than the rest, indicating its riskiness as an investment choice.

Suppose that we have invested a total of \$100 equally distributed to $X_1$ to $X_4$. We write the total nominal loss of the portfolio as $S := w_1 X_1+w_2 X_2+w_3 X_3+w_4 X_4$ where $w_1, w_2, w_3, w_4$ represent the nominal amounts invested into each stock ($w_1=\dots=w_4=\$25$ in our scenario).  It is established that capital allocations based on the CTE, TV, and TCM$_{\al,3}$, respectively will yield the following allocation outcome:
\begin{enumerate}[(i)]
\item $K = \aCTEa{S}$ and $K_i = \bCES{w_i X_i}{S}$ for all $i=1,\dots,n$;
\item $K = \mathrm{TV}_\al(S)$ and $K_i = \aTCOVS{w_i X_i,S}$ for all $i=1,\dots,n$;
\item $K = \bTCMa{3}{S}$ and $K_i = \aTCOVS{w_i X_i, (S-\aCTEa{S})^2}$ for all $i=1,\dots,n$,
\end{enumerate}
where the capital allocated can be calculated by Lemma \ref{lmm:3}, Corollary \ref{cor:2}, and Theorem \ref{thm:2}.

Figure \ref{fig:2} below plots $\aCTEa{S}$, $\mathrm{TV}_\al(S)$, and $\bTCMa{3}{S}$ and their allocations to each stock. It also displays the relative proportions of the capital allocated (given by $K_i/K$), which can be interpreted as the risk contribution by each stock. Selected capital allocation values for some quantiles are also
presented in Table \ref{tbl:2}.
\begin{figure} [H]
    \centering
    \includegraphics[width=1\linewidth]{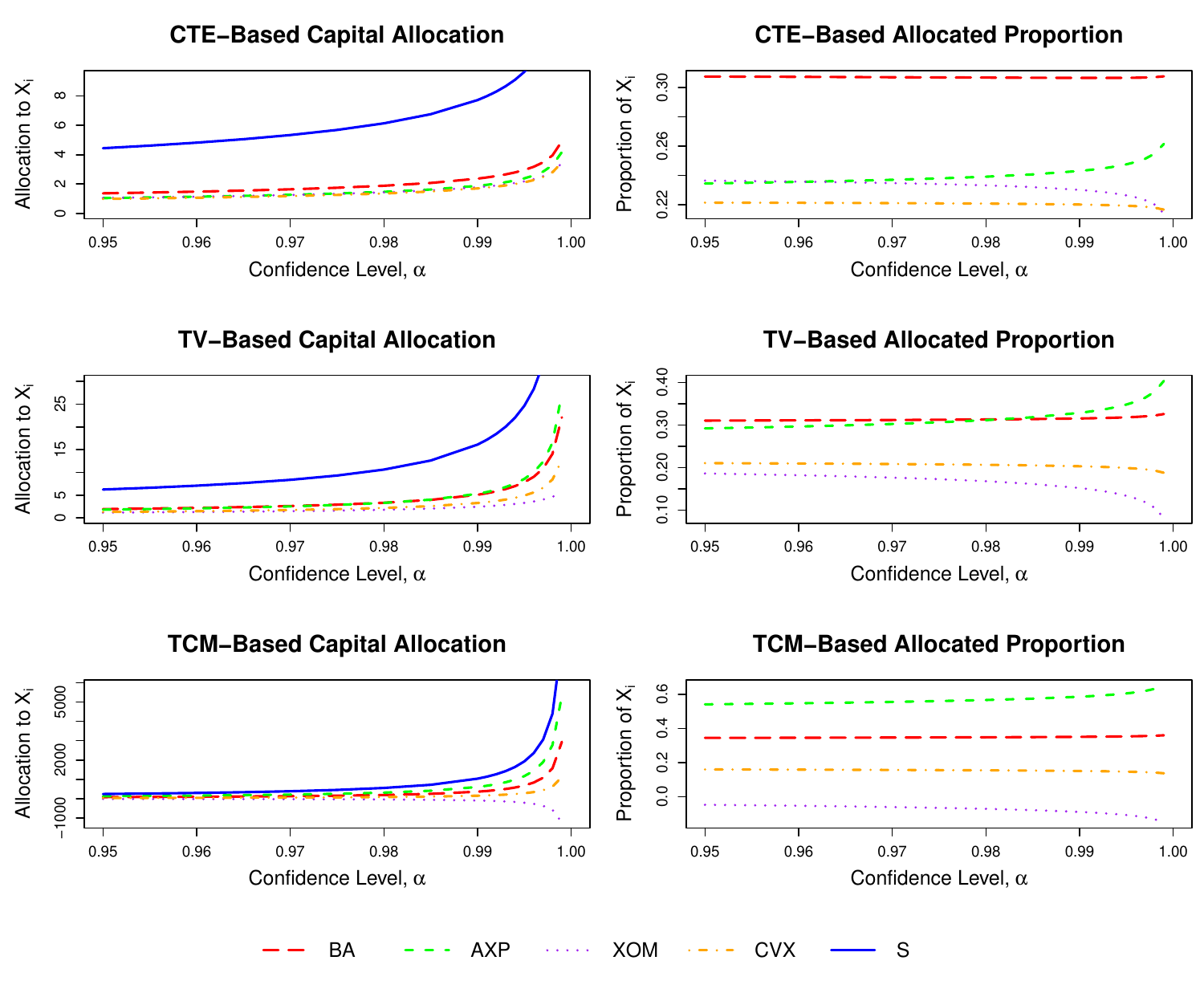}
    \caption{The capital allocated to $X_1, X_2, X_3, X_4$ based on the CTE, TV, and TCM$_{\al,3}$, and their relative proportions}
    \label{fig:2}
\end{figure} 
\begin{table}[ht]
\centering
\begin{tabular}{|c|c|c|c|c|c|}
\hline
$\bm \al$ & \textbf{Method} & \textbf{BA} & \textbf{AXP} & \textbf{XOM} & \textbf{CVX} \\
\hline
~~~~~ & CTE           & 1.367 & 1.042 & 1.051 & 0.984 \\
0.950 & TV            & 1.941 & 1.826 & 1.165 & 1.317 \\
~~~~~ & TCM$_{\al,3}$ & 84.616 & 132.798 & -11.467 & 39.308 \\
\hline
~~~~~ & CTE           & 1.482 & 1.136 & 1.137 & 1.067 \\
0.960 & TV            & 2.208 & 2.105 & 1.293 & 1.489 \\
~~~~~ & TCM$_{\al,3}$ & 103.534 & 163.942 & -15.735 & 47.600 \\
\hline
~~~~~ & CTE           & 1.640 & 1.266 & 1.254 & 1.181 \\
0.970 & TV            & 2.614 & 2.536 & 1.480 & 1.748 \\
~~~~~ & TCM$_{\al,3}$ & 134.353 & 215.154 & -23.241 & 60.949 \\
\hline
~~~~~ & CTE           & 1.884 & 1.468 & 1.432 & 1.356 \\
0.980 & TV            & 3.331 & 3.314 & 1.790 & 2.199 \\
~~~~~ & TCM$_{\al,3}$ & 194.097 & 315.683 & -39.261 & 86.399 \\
\hline
~~~~~ & CTE           & 2.369 & 1.878 & 1.778 & 1.701 \\
0.990 & TV            & 5.091 & 5.303 & 2.456 & 3.280 \\
~~~~~ & TCM$_{\al,3}$ & 364.372 & 608.061 & -91.787 & 156.936 \\
\hline
~~~~~ & CTE           & 4.918 & 4.180 & 3.422 & 3.463 \\
0.999 & TV            & 22.113 & 27.396 & 5.563 & 12.761 \\
~~~~~ & TCM$_{\al,3}$ & 2940.939 & 5323.095 & -1227.183 & 1125.261 \\
\hline
\end{tabular}
\caption{Capital allocated to $X_1, X_2, X_3, X_4$ based on the CTE, TV, and TCM$_{\al,3}$} \label{tbl:2}
\end{table}
The allocated proportions to BA and CVX remain stable over all $\al\in(0.95,1)$ and for the three allocation methods based on the CTE, TV, and TCM$_{\al,3}$, but they are very different for AXP and XOM. When $\al<0.98$, the allocated proportion to AXP for the TV is noticeably higher than for the CTE (increasing from approximately 24\% to 30\% of the total). This trend persists when switching from the TV to the TCM$_{\al,3}$. This observation flips for XOM. Interestingly, the risk contribution to the TCM$_{\al,3}$ for XOM is negative, indicating some diversification benefit to the portfolio. When $\al>0.98$, the TV and TCM$_{\al,3}$ amplify the tail behaviour of AXP and XOM (relative to the CTE) to different extents. This is sensible as the TV and TCM measure different dependencies between $X_i$ and $S$, namely the expectation and dispersion in the tail region, respectively. This demonstrates the necessity of including the TV and TCM$_{\al,3}$ for a more thorough understanding of the stocks' tail behaviour. 

The observations so far suggest that neither the CTE-based nor TCM-based capital allocation should be used in isolation. Therefore, we suggest a linear combination of the CTE, TV, and TCM$_{\al,3}$, as previously mentioned. By taking $m_1=1$, $m_2=p$ and $m_3=q$ in \eqref{eq:2}, the total capital reserve is given by
\begin{align} \label{eq:14}
K = \aCTEa{S} + p \cdot \mathrm{TV_\al}(S) + q \cdot \bTCMa{3}{S},
\end{align} 
where $p,~q \ge 0$ represent the importance of the TV and TCM$_{\al,3}$ relative to the CTE, and $\al \in (0,1)$ is the confidence level. The corresponding capital allocated to stock $i$ for $i=1,\dots,4$ are given by
\begin{align*}
K_i = \bCES{w_i X_i}{S} + p \cdot \aTCOVS{w_i X_i,S} + q \cdot \aTCOVS{w_i X_i, (S-\aCTEa{S})^2}.
\end{align*}
To ensure each term in \eqref{eq:14} has a similar magnitude based on their values in Table \ref{tbl:2}, a reasonable choice for $p$ and $q$ is to select $p \in [0, 3]$ and $q \in [0, 0.005]$. Figure \ref{fig:3} shows how the capital allocation varies when priority shifts from the CTE to the TV and TCM$_{\al,3}$, as shown by different selections of $p$ and $q$, and Figure \ref{fig:4} shows the respective proportions of allocated capitals.
\begin{figure} [H]
    \centering
    \includegraphics[width=1\linewidth]{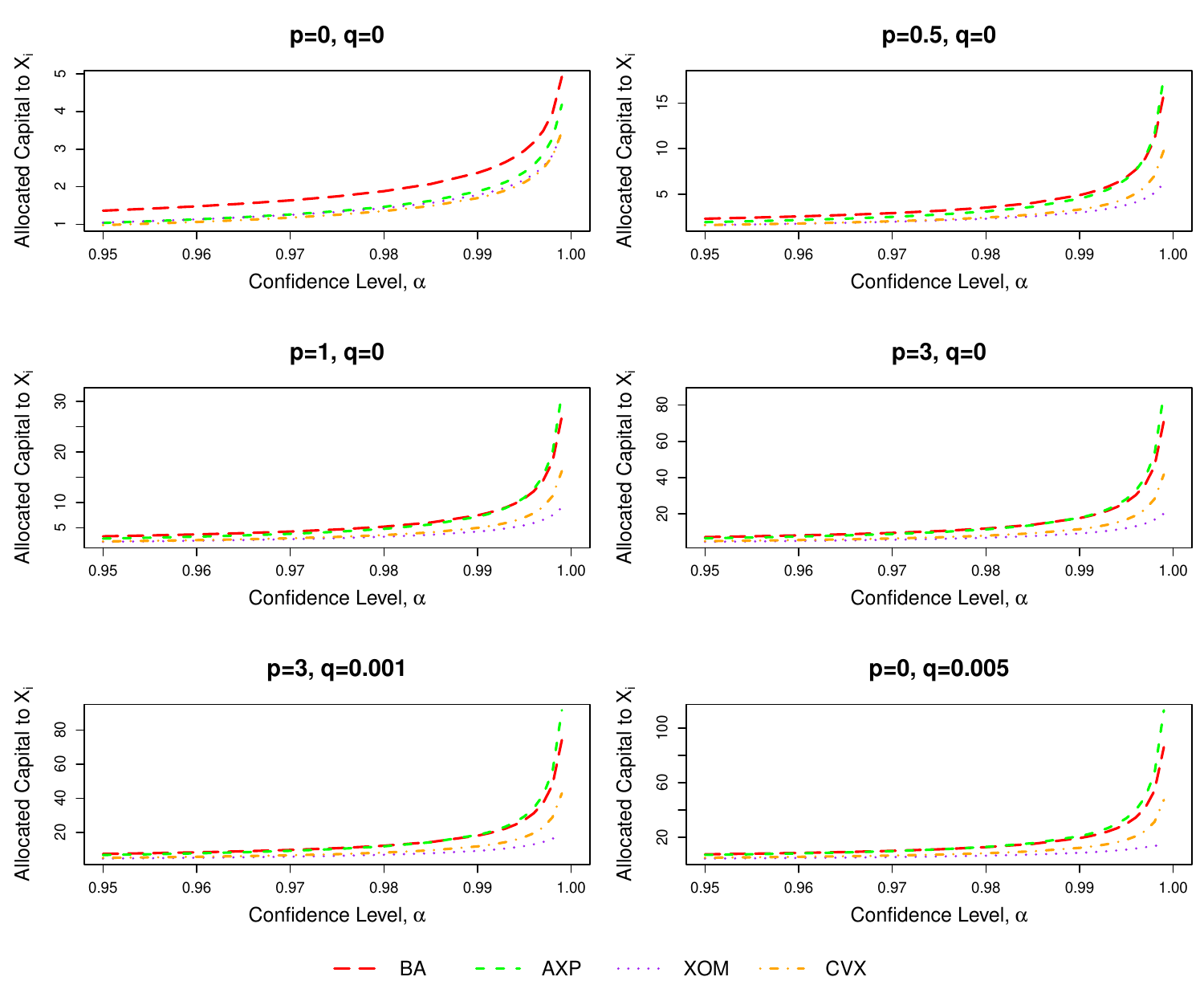}
    \caption{Capital allocated under different CTE-based, TV-based, and TCM$_{\al,3}$-based capital allocation combinations}
    \label{fig:3} 
\end{figure}
\begin{figure} [H]
    \centering  
    \includegraphics[width=1\linewidth]{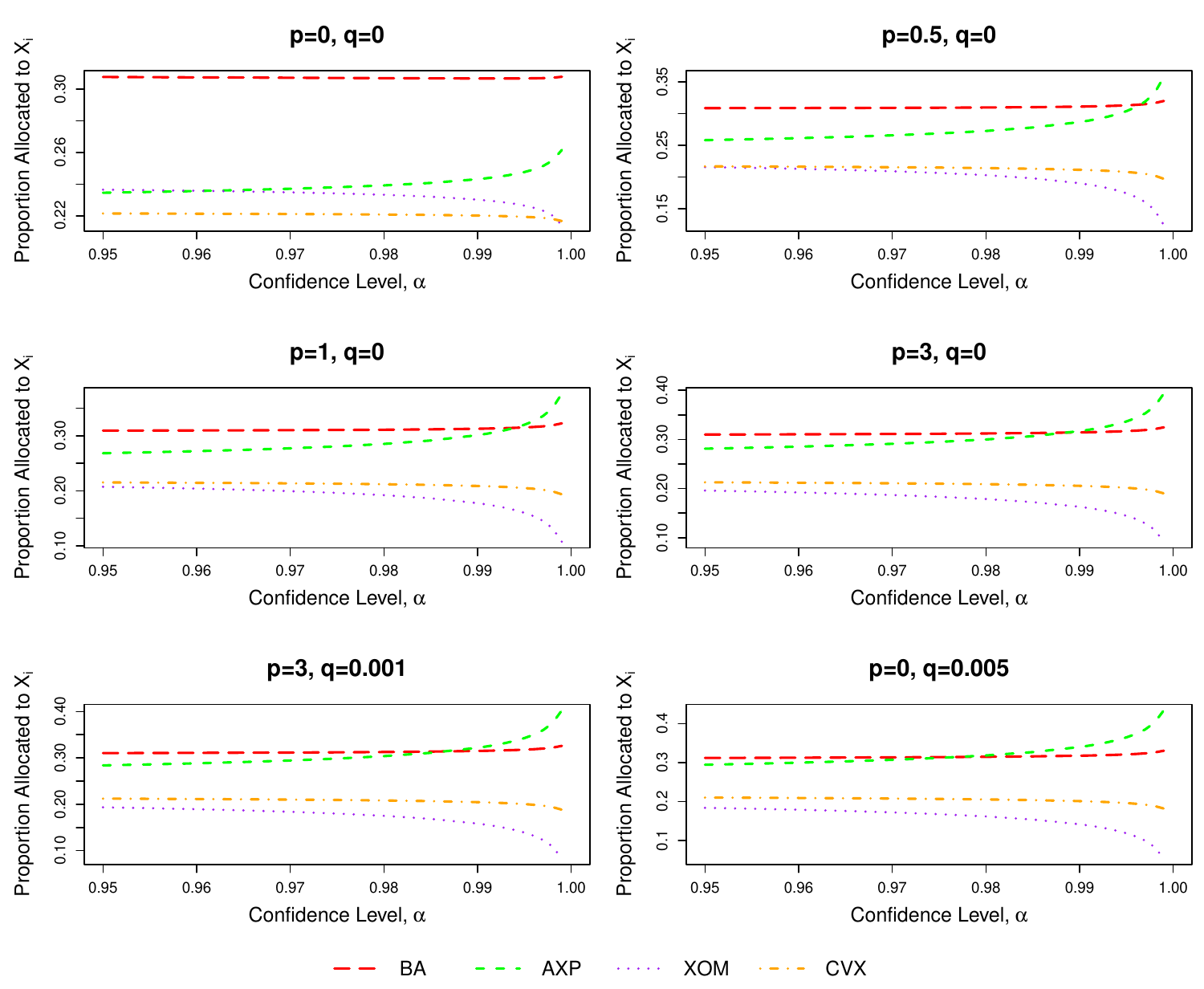}
    \caption{Proportions of capital allocated under different CTE-based, TV-based, and TCM$_{\al,3}$-based capital allocation combinations}
    \label{fig:4}
\end{figure}
The overall observations are not too surprising, as the individual patterns are already displayed in Figure \ref{fig:2}. The more priority placed on the TV or TCM$_{\al,3}$, the more capital allocated for AXP, the lesser for XOM, and roughly the same for BA and CVX.

\section{Conclusion} \label{sec:6}
In this paper, we introduce a new capital allocation method based on the tail central moments (TCM), which includes the tail variance-based capital allocation of \cite{V04} and \cite{FL06}. Together with the conditional tail expectation (CTE)-based capital allocation, the TCM-based capital allocation provides a more thorough risk assessment approach. This method is applied to the class of normal mean–variance mixture (NMVM) distributions, which has widespread finance and insurance applications. In particular, we derive analytical recursive expressions for the TCM and its capital allocation for the multivariate NMVM class. A numerical illustration shows that the TCM is an insightful risk metric that reveals important tail behaviours which are otherwise not detectable by the CTE alone. These results provide a readily applicable framework to assess each component's risk contribution to the portfolio's total risk and to quantify interconnected risks, enabling financial and insurance agents to reliably estimate their tail losses.

\bibliographystyle{apalike} \bibliography{bibliography}

\newpage
\appendix


\setcounter{table}{0}
\setcounter{figure}{0}
\setcounter{equation}{0}
\renewcommand{\thetable}{A.\arabic{table}}
\renewcommand{\thefigure}{A.\arabic{figure}}
\renewcommand{\theequation}{A.\arabic{equation}}

\setcounter{theorem}{0}
\setcounter{proposition}{0}
\renewcommand{\thetheorem}{A.\arabic{theorem}}
\renewcommand{\theproposition}{A.\arabic{proposition}}
\setcounter{lemma}{0}
\renewcommand{\thelemma}{A.\arabic{lemma}}
\setcounter{example}{0}
\renewcommand{\theexample}{A.\arabic{example}}
\setcounter{corollary}{0}
\renewcommand{\thecorollary}{A.\arabic{corollary}}
\setcounter{remark}{0}
\renewcommand{\theremark}{A.\arabic{remark}}
\setcounter{definition}{0}
\renewcommand{\thedefinition}{A.\arabic{definition}}

\begin{center}
\LARGE	Appendices
\end{center}

\section{The TCM-based Euler allocation principle} \label{app:A}

This section derives the TCM-based capital allocation using the Euler allocation principle in Remark \ref{rem:3}. For $\bm{w}=(w_1,\dots,w_n) \in \R^n$, define  $L(\bm{w})=w_1 X_1 + \dots + w_n X_n$, and the aggregate loss $S=L(1, \dots, 1)$. Denote by $l_\al(\bm{w})$ the $\al$-quantile of $L(\bm{w})$ for $\al \in (0,1)$. A risk measure is a functional that maps random variables to the real line.  A risk measure $\rho$ is positive homogeneous if  for all $t>0$ and any random variable $X$,  $\rho(tX))=t\rho(X)$. Assuming that  $\rho$ is positive homogeneous and  $\rho(L(\bm{w}))$ is  continuously differentiable in $\bm{w} \in \R^n$, the Euler allocation principle with risk measure $\rho$ is defined as 
\begin{align*}
K=  \rho(L(1,\dots,1)) \mbox{~~and~~} K_i = 
w_i \left. \frac{\partial \rho(L(\bm{w}))}{\partial w_i} \right|_{\bm{w=1}},
\end{align*}
where $K$ is the capital reserve for $S$ and $K_i$ is the capital allocated to $X_i$. The Euler allocation principle automatically satisfies the full allocation property since
\begin{align*}
\rho(L(\bm{w})) =  \iSum w_i \frac{\partial \rho(L(\bm{w}))}{\partial w_i} \mbox{~~holds for all $\bm{w}\in\R^n$}.
\end{align*} 

%
Remark \ref{rem:3} states that the Euler allocation method is not applicable to the total capital reserve $\rho(S)=\bTCMa{k}{S}$ as in Definition \ref{def:3}. This is because the TCM is not positive homogeneous, and some modifications are required.
%
\begin{proposition}
Fix $\al \in (0,1)$ and $k \in \N$. Assume that the random vector $(X_1, \dots, X_n) \in \R^n$ satisfies Assumption 2.3 of \cite{T01}. The Euler allocation principle with $\bTCMa{k}{\cdot}^{1/k}$ is  given by
\begin{align*}
K = \bTCMa{k}{S}^{\frac{1}{k}} \mbox{~~and~~}
K_i = \frac{ \aTCOVS{X_i, (S-\aCTEa{S})^{k-1} } }{
    \bTCMa{k}{S}^{1-\frac{1}{k}} } .
\end{align*}
\end{proposition}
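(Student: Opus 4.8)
The plan is to apply the Euler allocation principle directly to the positively homogeneous risk measure $\rho(\cdot) = \bTCMa{k}{\cdot}^{1/k}$ and compute the partial derivatives explicitly. First I would verify positive homogeneity: since $\aCTEa{tX} = t\aCTEa{X}$ and $x_\al(tX) = t\,x_\al(X)$ for $t>0$, we have $(tX - \aCTEa{tX})^k = t^k (X - \aCTEa{X})^k$ and the conditioning event $\{tX > x_\al(tX)\} = \{X > x_\al(X)\}$ is unchanged, so $\bTCMa{k}{tX} = t^k \bTCMa{k}{X}$, whence $\rho(tX) = t\rho(X)$. The regularity assumption (Assumption 2.3 of \cite{T01}) is invoked precisely to guarantee that $\bTCMa{k}{L(\bm w)}$, and hence $\rho(L(\bm w))$, is continuously differentiable in $\bm w$, so that the Euler principle is well-defined.

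Next I would compute $\partial_{w_i} \rho(L(\bm w))$ by the chain rule: $\partial_{w_i}\rho(L(\bm w)) = \frac{1}{k}\bTCMa{k}{L(\bm w)}^{\frac1k - 1}\,\partial_{w_i}\bTCMa{k}{L(\bm w)}$. The core computation is then $\partial_{w_i}\bTCMa{k}{L(\bm w)}$. Writing $L = L(\bm w)$ and $L_\al = L - \aCTEa{L}$, one has $\bTCMa{k}{L} = \E[L_\al^k \mid L > l_\al(\bm w)]$. The key observation, which is the standard mechanism behind Euler/CTE-type allocation derivations (as in \cite{T04,T08}), is that differentiating this conditional expectation with respect to $w_i$ produces two pieces — one from differentiating $L_\al^k$ inside the expectation, one from differentiating the conditioning threshold — but the threshold (boundary) term vanishes. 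Specifically, $\partial_{w_i}\bTCMa{k}{L} = k\,\E[L_\al^{k-1}\,\partial_{w_i}L_\al \mid L > l_\al(\bm w)] + (\text{boundary term})$, and since $\partial_{w_i} L = X_i$ and $\partial_{w_i}\aCTEa{L} = \bCES{X_i}{L}$ (the CTE-allocation identity, valid under the same regularity), we get $\partial_{w_i} L_\al = X_i - \bCES{X_i}{L}$. Evaluating at $\bm w = \bm 1$ (so $L = S$) gives $k\,\E[(S-\aCTEa{S})^{k-1}(X_i - \bCES{X_i}{S}) \mid S > s_\al] = k\,\aTCOVS{X_i, (S-\aCTEa{S})^{k-1}}$, the tail covariance.

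Assembling the pieces, $K_i = w_i \partial_{w_i}\rho(L(\bm w))\big|_{\bm w = \bm 1} = 1 \cdot \frac{1}{k}\bTCMa{k}{S}^{\frac1k-1} \cdot k\,\aTCOVS{X_i,(S-\aCTEa{S})^{k-1}} = \bTCMa{k}{S}^{-(1-\frac1k)}\,\aTCOVS{X_i,(S-\aCTEa{S})^{k-1}}$, which is exactly the claimed formula, and $K = \rho(S) = \bTCMa{k}{S}^{1/k}$. Full allocation follows automatically from Euler's theorem for positively homogeneous functions, or can be checked directly via Proposition \ref{prop:1}: $\iSum K_i = \bTCMa{k}{S}^{-(1-\frac1k)}\iSum\aTCOVS{X_i,(S-\aCTEa{S})^{k-1}} = \bTCMa{k}{S}^{-(1-\frac1k)}\,\bTCMa{k}{S} = \bTCMa{k}{S}^{1/k} = K$.

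The main obstacle is the rigorous justification that the boundary term in $\partial_{w_i}\bTCMa{k}{L(\bm w)}$ vanishes — i.e., that one may differentiate the tail conditional expectation by only differentiating the integrand and not the domain of integration. This is where Assumption 2.3 of \cite{T01} does the heavy lifting: it provides sufficient smoothness and integrability of the joint density of $(X_i, L(\bm w))$ to apply a dominated-convergence / Leibniz-type argument, and the cancellation occurs because at the threshold $l_\al(\bm w)$ the integrand $L_\al^k$ does not force a nonzero contribution once the normalizing factor $1-\al$ is accounted for (the derivative of $l_\al(\bm w)$ and of the normalizing constant offset each other, exactly as in the classical CTE-allocation proof). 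I would either cite \cite{T01,T04} for this differentiation lemma or, for completeness, sketch it: write $\bTCMa{k}{L(\bm w)}(1-\al) = \E[(L(\bm w)-\aCTEa{L(\bm w)})^k \ii_{\{L(\bm w) > l_\al(\bm w)\}}]$, differentiate, and observe the indicator's derivative yields a term proportional to the density of $L(\bm w)$ at $l_\al(\bm w)$ times a factor that telescopes against $\partial_{w_i} l_\al(\bm w)$.
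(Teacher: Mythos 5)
Your proof is correct and follows the same skeleton as the paper's: verify positive homogeneity of $\bTCMa{k}{\cdot}^{1/k}$, invoke the Euler principle under the regularity of Assumption 2.3 of \cite{T01}, use the chain rule for the $1/k$ power, and reduce everything to showing that $\frac{\partial}{\partial w_i}\bTCMa{k}{L(\bm w)}\big|_{\bm w=\bm 1}=k\,\aTCOVS{X_i,(S-\aCTEa{S})^{k-1}}$. Where you genuinely diverge is in how that core derivative is computed. The paper expands $\bTCMa{k}{L(\bm w)}$ binomially into terms $\aCTEa{L(\bm w)}^{j}\,\E[L(\bm w)^{k-j}\mid L(\bm w)>l_\al(\bm w)]$, differentiates each with the product rule and Corollary 4.2 of \cite{T01} (which is stated for pure tail moments $\E[L^{m}\mid L\ge l_\al]$), and then reassembles the sum through binomial index shifts; you differentiate the centered power directly inside the conditional expectation, using $\frac{\partial}{\partial w_i}\bigl(L(\bm w)-\aCTEa{L(\bm w)}\bigr)=X_i-\E[X_i\mid L(\bm w)>l_\al(\bm w)]$ and arguing that the boundary term from the moving threshold vanishes. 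The two computations are equivalent, and your route buys a much shorter calculation by avoiding the index-shift bookkeeping; its cost is that the differentiation step you use is strictly more general than the cited corollary (the integrand depends on $\bm w$ through the CTE, not only through powers of $L$), so it needs its own justification, and your sketch of the cancellation is slightly off: the boundary contribution is proportional to $\bigl(l_\al(\bm w)-\aCTEa{L(\bm w)}\bigr)^{k}f_{L(\bm w)}(l_\al(\bm w))\bigl(\E[X_i\mid L(\bm w)=l_\al(\bm w)]-\frac{\partial}{\partial w_i}l_\al(\bm w)\bigr)$, and it vanishes because of the VaR Euler identity $\frac{\partial}{\partial w_i}l_\al(\bm w)=\E[X_i\mid L(\bm w)=l_\al(\bm w)]$ (valid under Assumption 2.3 of \cite{T01}), not because of an offset with the normalizing factor, which stays fixed at $1-\al$. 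Alternatively, your chain-rule shortcut can be made rigorous exactly by the binomial expansion, which is the paper's route; either way the final assembly, including the full-allocation check via Proposition \ref{prop:1}, is sound.
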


\begin{proof} 
It is easy to show that $\bTCMa{k}{\cdot}^{1/k}$ is partially differentiable (refer to, e.g., \cite{T01}) and positive homogeneous. We first require Corollary 4.2 of \cite{T01}, which states that 
\begin{align*}
\frac{\partial}{\partial w_i} \E \sB{L(\bm{w})^k \mid L(\bm{w}) \geq l_\al(\bm{w})} = k \E \sB{X_i L(\bm{w})^{k-1} \mid L(\bm{w}) \geq l_\al(\bm{w})} .
\end{align*}
For $\bm{w} \in \R^n$, denote by $\rho^*(\bm{w})=\bTCMa{k}{L(\bm{w})}$ and $\rho(\bm{w})=\bTCMa{k}{L(\bm{w})}^{\frac{1}{k}}$. Using the above result gives
\begin{align} \label{eq:A1}
\left. \frac{\partial \rho^*(\bm{w})}{\partial w_i} \right|_{\bm{w=1}}
=&~ \left. \frac{\partial}{\partial w_i} 
    \E \sB{(L(\bm{w})-\aCTEa{L(\bm{w})})^k \mid L(\bm{w}) > l_\al(\bm{w})}
  \right|_{\bm{w=1}} \nonumber \\
=&~ \left. \frac{\partial}{\partial w_i} \rB{
    \sum_{j=0}^{k} \binom{k}{j} (-1)^j \aCTEa{L(\bm{w})}^j 
    \E \sB{L(\bm{w})^{k-j} \mid L(\bm{w}) > l_\al(\bm{w})} 
  } \right|_{\bm{w=1}} \nonumber \\
=&~ \sum_{j=0}^{k} \binom{k}{j} (-1)^j 
  \left. \frac{\partial}{\partial w_i} \rB{ \aCTEa{L(\bm{w})}^j
     \E \sB{L(\bm{w})^{k-j} \mid L(\bm{w}) > l_\al(\bm{w})} 
  } \right|_{\bm{w=1}} ,
\end{align}
where
\begin{align*}
&~ \left. \frac{\partial}{\partial w_i} \rB{ \aCTEa{L(\bm{w})}^j
    \E \sB{L(\bm{w})^{k-j} \mid L(\bm{w}) > l_\al(\bm{w})} 
  } \right|_{\bm{w=1}} \\
=&~ \aCTEa{L(\bm{w})}^j 
  \left. \frac{\partial}{\partial w_i} 
    \E \sB{L(\bm{w})^{k-j} \mid L(\bm{w}) > l_\al(\bm{w})}
  \right|_{\bm{w=1}}
\\ &~
  + \E \sB{L(\bm{w})^{k-j} \mid L(\bm{w}) > l_\al(\bm{w})} \left. \frac{\partial}{\partial w_i} 
    \rB{ \E \sB{L(\bm{w}) \mid L(\bm{w}) > l_\al(\bm{w})} }^j 
  \right|_{\bm{w=1}} \\
=&~ \aCTEa{S}^j \cdot (k-j) \bCES{X_i S^{k-j-1}}{S}
  + \bCES{S^{k-j}}{S} \cdot j ~ \aCTEa{S}^{j-1} \bCES{X_i}{S} .
\end{align*}
Hence, \eqref{eq:A1} becomes
\begin{align*}
&~ \sum_{j=0}^{k} \binom{k}{j} (k-j) (-1)^j \aCTEa{S}^j 
  \bCES{X_i S^{k-j-1}}{S}
\\ &~
  + \sum_{j=0}^{k} \binom{k}{j} (j) (-1)^j \aCTEa{S}^{j-1} \bCES{X_i}{S} \bCES{S^{k-j}}{S} \\
=&~ \sum_{j=0}^{k-1} \binom{k}{j} (k-j) (-1)^j \aCTEa{S}^j 
  \bCES{X_i S^{k-j-1}}{S} + 0
\\ &~
  + 0 + \sum_{j=1}^{k} \binom{k}{j} (j) (-1)^j \aCTEa{S}^{j-1} \bCES{X_i}{S} \bCES{S^{k-j}}{S} \\
=&~ k \sum_{j=0}^{k-1} \binom{k-1}{j} (-1)^j \aCTEa{S}^j 
  \bCES{X_i S^{k-j-1}}{S}
\\ &~
  + \sum_{j=0}^{k-1} \binom{k}{j+1} (j+1) (-1)^{j+1} \aCTEa{S}^j \bCES{X_i}{S} \bCES{S^{k-j-1}}{S} \\
=&~ k \sum_{j=0}^{k-1} \binom{k-1}{j} (-1)^j \aCTEa{S}^j 
  \bCES{X_i S^{k-j-1}}{S}
\\ &~
  - k \sum_{j=0}^{k-1} \binom{k-1}{j} (-1)^j \aCTEa{S}^j \bCES{X_i}{S} \bCES{S^{k-j-1}}{S} \\
=&~ k \bCES{X_i \sum_{j=0}^{k-1} \binom{k-1}{j} (-\aCTEa{S})^j S^{k-1-j}}{S}
\\ &~
  - k \bCES{X_i}{S} \bCES{\sum_{j=0}^{k-1} \binom{k-1}{j} (-\aCTEa{S})^j S^{k-1-j}}{S} \\
=&~ k \cTCMb{X_i}{S}{k-1} - k \bCES{X_i}{S} \aTCMS{k-1} \\
=&~ k \aTCOVS{X_i, (S-\aCTEa{S})^{k-1}} .
\end{align*}
Finally, the capital allocated to each component is given by
\begin{align*}
K_i 
= \left. \frac{\partial \rho(\bm{w})}{\partial w_i} \right|_{\bm{w=1}}
= \frac{1}{k ~ \bTCMa{k}{S}^{1-\frac{1}{k}}} \left. \frac{\partial \rho^*(\bm{w})}{\partial w_i} \right|_{\bm{w=1}} 
= \frac{ \aTCOVS{X_i, (S-\aCTEa{S})^{k-1} } }{
    \bTCMa{k}{S}^{1-\frac{1}{k}} } .
\end{align*}
The proof is complete.
\end{proof}

\section{Proof for Remark \ref{rem:5}} \label{app:B}

We revisit the identities given in Remark \ref{rem:5}, which is given below, in more detail.
\begin{lemma}
Consider the same random variables $X_1, \dots, X_n$, and $S$ in Lemma \ref{lmm:3}, as well as all related parameters and coefficients. We have the following identities:
\begin{align} \label{eq:A2}
\bCES{X_i^2}{S} 
=&~ \AIb^2 \bCES{S^2}{S}
  + 2 \AIa \AIb \bCES{S}{S} \nonumber \\
  &~~ + 2 \AIc \AIb \FRc \bCES{S^*}{S^*} + \AIa^2 \nonumber \\
  &~~ + (2 \AIa \AIc + \SGi - \AIb^2 \SGS) \FRc 
  + \AIc^2 \FRd ,
\end{align}
and 
\begin{align} \label{eq:A3}
\bCES{X_i X_j}{S} 
=&~ \AIb \AJb \bCES{S^2}{S}
  + \rB{\AIb \AJa + \AIa \AJb} \bCES{S}{S}
\nonumber \\ &~~
  + \rB{\AIb \AJc + \AIc \AJb} \FRc \bCES{S^*}{S^*} + \AIa \AJa 
\nonumber \\ &~~
  + \rB{\AIc \AJa + \AIa \AJc + \SGij - \AIb \AJb \SGS} \FRc
  + \AIc \AJc \FRd .
\end{align}
\end{lemma}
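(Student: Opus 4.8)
The plan is to prove identity \eqref{eq:A3} directly; identity \eqref{eq:A2} then follows immediately as the special case $j=i$ (recalling $\SGi=\SGij$ when $j=i$). The strategy parallels the proofs of Lemma \ref{lmm:2} and Proposition \ref{prop:2}: condition on the pair $(S,\Th)$, under which $(X_i,X_j)$ is bivariate normal with first and second moments given explicitly by Lemma \ref{lmm:4}, and then integrate out $s$ and $\th$ against the appropriate tail density.

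First I would rewrite $\bCES{X_iX_j}{S}$ as a normalised double integral over $s\in(s_\al,\infty)$ and $\th\in(0,\infty)$ of $\aCEEst{X_i X_j}\,f_{\St}(s)\,\pi(\th)$ divided by $1-\al$, using the tower property together with $f_S(s)=\int_0^\infty f_{\St}(s)\pi(\th)\,d\th$, exactly as in the proof of Lemma \ref{lmm:2} with $l=0$. Next, since $(X_i,X_j\mid S=s,\Th=\th)$ is bivariate normal by Lemma \ref{lmm:4}, I would decompose $\aCEEst{X_i X_j}=\aCOVst{X_i,X_j}+\aCEEst{X_i}\,\aCEEst{X_j}$ and substitute the explicit expressions $\aCOVst{X_i,X_j}=\th(\SGij-\AIb\AJb\SGS)$ and $\aCEEst{X_i}=\AIa+\AIb s+\AIc\th$ from that lemma. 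Multiplying the two linear forms produces a polynomial in $(s,\th)$ supported on the monomials $1,s,s^2,\th,s\th,\th^2$, with coefficients built from products of $\AIa,\AIb,\AIc$ with $\AJa,\AJb,\AJc$; adding the covariance term contributes a further $\th$-multiple of $\SGij-\AIb\AJb\SGS$.

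The heart of the computation is then evaluating the six resulting integrals. The monomials $1,s,s^2$ integrated against $\tfrac{1}{1-\al}f_S(s)$ over $\{s>s_\al\}$ give $1$, $\bCES{S}{S}$, and $\bCES{S^2}{S}$ respectively. For each monomial carrying a factor $\th$ I would apply the change of measure $\th\pi(\th)=\Ca\pi^*(\th)$ (and $\th^2\pi(\th)=\Cb\pi^{**}(\th)$) together with $\int_0^\infty f_{\St}(s)\pi^{*(l)}(\th)\,d\th=f_{\SL}(s)$, valid because $S$ and $\SL$ share the parameters $\mu,\sg^2,\gm$ and differ only in their mixing law. This collapses the $\th$-terms to $\tfrac{\Ca}{1-\al}\int_{s_\al}^\infty s^m f_{S^*}(s)\,ds$ for $m\in\{0,1\}$, equal to $\FRc$ and $\FRc\,\bCES{S^*}{S^*}$ respectively (using $\Fbar_{S^*}(s_\al)=1-\ALa$), and the $\th^2$-term to $\tfrac{\Cb}{1-\al}\Fbar_{S^{**}}(s_\al)=\FRd$. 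Collecting the coefficients of $\bCES{S^2}{S}$, $\bCES{S}{S}$, $\FRc\,\bCES{S^*}{S^*}$, the constant, $\FRc$, and $\FRd$ then reproduces the six groups displayed in \eqref{eq:A3}; in particular, the covariance contribution $\SGij-\AIb\AJb\SGS$ lands inside the coefficient of $\FRc$ together with $\AIc\AJa+\AIa\AJc$, as claimed.

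I expect the only real obstacle to be bookkeeping: keeping the six monomial coefficients straight through the expansion of $(\AIa+\AIb s+\AIc\th)(\AJa+\AJb s+\AJc\th)$ and making sure the constants $\Ca,\Cb$ and tail probabilities $1-\ALa,1-\ALb$ are attached to the correct terms. Nothing is conceptually delicate once Lemma \ref{lmm:4} and the change-of-measure identity from the model setup in Section \ref{sec:3} are in hand; the $i\leftrightarrow j$ symmetry of the final expression is manifest, and the $j=i$ specialisation giving \eqref{eq:A2} requires no extra work.
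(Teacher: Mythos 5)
Your proposal is correct and follows essentially the same route as the paper's own proof: both condition on $(S,\Theta)$, use Lemma \ref{lmm:4} to expand $\E[X_iX_j\mid S=s,\Theta=\th]$ into a polynomial in $(s,\th)$, and evaluate the resulting integrals through the change of measure $\th^{l}\pi(\th)=c^{*(l)}\pi^{*(l)}(\th)$, which the paper packages as its intermediate identity \eqref{eq:A4}. The specialisation $j=i$ yielding \eqref{eq:A2} is likewise exactly how the paper concludes.
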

\begin{proof}
Before proving the lemma, we first provide a useful intermediate result below. Fix $k \in \N$, $l \in \N_0$, $\al \in (0,1)$, and let random variable $\Th^{*(l)}$ has density $\pi^{*(l)}(\th)$, with $\th>0$. We have
\begin{align} \label{eq:A4}
\aINTSa{ \INTth{ s^k \th^l \PIt \FSt } } 
=&~ \Cl \INTth{ \pi^{*(l)}(\th) \aINTSa{ s^k \FSt } } \nonumber \\
=&~ \Cl \INTth{ \pi^{*(l)}(\th) \aINTSa{ s^k f_{\SL | \th}(s) } } \nonumber \\
=&~ (1 - \ALl) \Cl \INTth{ \pi^{*(l)}(\th) \E\sB{(\SL)^k \mid \SL>s_\al, \Th^{*(l)}=\th} } \nonumber \\
=&~ (1 - \ALl) \Cl \bCES{(\SL)^k}{\SL} ,
\end{align}
where the second equality is due to $f_{\SL | \Th^{*(l)}}(s \mid \th)=f_{S | \Th^{*(l)}}(s \mid \th)$, based on the definition of $\SL$. 

Using (36) of \cite{KK19} (and directly replacing $X_i$ with $X_i X_j$), we obtain
\begin{align} \label{eq:A5}
\bCES{X_i X_j}{S} 
=&~ \FRa \aINTSa{ \aCEEs{X_i X_j} \FS } \nonumber \\
=&~ \FRa \aINTSa{ \INTth{ \aCEEst{X_i X_j} \FSt \PIt } } .
\end{align}
%
%
%
On the other hand, Lemma \ref{lmm:4} implies that
\begin{align*}
\aCEEst{X_i X_j} 
=&~ \aCEEst{X_i} \aCEEst{X_j} + \aCOVst{X_i, X_j} \\
=&~ \rB{\AIa + \AIc \th  + \AIb s} \rB{\AJa + \AJc \th + \AJb s}
  + \th (\SGij - \AIb \AJb \SGS) \\
=&~ \AIb \AJb s^2 
  + \rB{\AIb \AJa + \AIa \AJb} s
  + \rB{\AIb \AJc + \AIc \AJb} \th s
\\ &~~
  + \AIc \AJc \th^2
  + \rB{\AIc \AJa + \AIa \AJc + \SGij - \AIb \AJb \SGS} \th
  + \AIa \AJa .
\end{align*}
Substituting the above result into \eqref{eq:A5} and applying \eqref{eq:A4} gives
\begin{align*}
&~ \FRa \aINTSa{ \INTth{ \aCEEst{X_i X_j} \PIt } \FS } \\
=&~ \FRa \aINTSa{ \INTth{ \big( 
    \AIb \AJb s^2 
    + \rB{\AIb \AJa + \AIa \AJb} s
    + \rB{\AIb \AJc + \AIc \AJb} \th s
\\ &~~~~~~~~~~~~~~~~~~~~~~
    + \AIc \AJc \th^2
    + \rB{\AIc \AJa + \AIa \AJc + \SGij - \AIb \AJb \SGS} \th
    + \AIa \AJa \big) \PIt \FSt } } \\
=&~ \AIb \AJb \bCES{S^2}{S}
  + \rB{\AIb \AJa + \AIa \AJb} \bCES{S}{S}
\nonumber \\ &~~
  + \rB{\AIb \AJc + \AIc \AJb} \FRc \bCES{S^*}{S^*} + \AIa \AJa 
\nonumber \\ &~~
  + \rB{\AIc \AJa + \AIa \AJc + \SGij - \AIb \AJb \SGS} \FRc
  + \AIc \AJc \FRd ,
\end{align*}
thus \eqref{eq:A3} is obtained. By setting $j=i$, \eqref{eq:A2} is directly implied from \eqref{eq:A3}.
\end{proof}

\end{document}